\newtheorem{thm}{Theorem}[section]
\newtheorem{lem}[thm]{Lemma}
\newtheorem{cor}[thm]{Corollary}
\newtheorem{conj}[thm]{Conjecture}
\newtheorem{defi}[thm]{Definition}
\DeclareMathOperator*{\E}{\mathbb{E}}
\DeclareMathOperator*{\Cov}{\mathsf{Cov}}
\DeclareMathOperator*{\Var}{\mathsf{Var}}
\newcommand{\Inf}{\mathsf{Inf}}
\newcommand{\w}{\mathsf{wt}}
\newcommand{\len}{\mathsf{len}}
\newcommand{\dist}{\mathsf{dist}}
\newcommand{\RR}{\mathbb{R}}      
\newcommand{\NN}{\mathbb{N}}      
\newcommand{\ZZ}{\mathbb{Z}}      
\newcommand{\calD}{\mathcal{D}}
\newcommand{\VD}{V_\mathcal{D}}
\newcommand{\ED}{E_\mathcal{D}}
\newcommand{\bari}{\overline{i}}
\newcommand{\bara}{\overline{\alpha}}
\newcommand{\barc}{\overline{c}}
\newcommand{\SE}{\mathsf{E}}
\newcommand{\SV}{\mathsf{V}}
\newcommand{\SL}{\mathsf{L}}
\newcommand{\SM}{\mathsf{M}}
\newcommand{\SF}{\mathsf{F}}
\newcommand{\SB}{\mathsf{B}}
\newcommand{\SSS}{\mathsf{S}}
\newcommand{\SLF}{\mathsf{LF}}
\newcommand{\SSF}{\mathsf{SF}}
\newcommand{\SLB}{\mathsf{LB}}
\newcommand{\SSB}{\mathsf{SB}}
\newcommand{\OPT}{\mathsf{OPT}}
\newcommand{\edgecut}{Short Path Edge Cut }
\newcommand{\vertexcut}{Short Path Vertex Cut }
\newcommand{\edgecutn}{Short Path Edge Cut}
\newcommand{\vertexcutn}{Short Path Vertex Cut}
\begin{document}

\setcounter{page}{0}

\title{{\bf Improved Hardness for Cut, Interdiction, \\ 
and Firefighter Problems}}

\author{
Euiwoong Lee\thanks{Supported by the Samsung Scholarship, the Simons Award for Graduate Students in TCS, and Venkat Guruswami's NSF CCF-1115525. {\tt euiwoonl@cs.cmu.edu} }}

\date{Computer Science Department \\ Carnegie Mellon University \\ Pittsburgh, PA 15213.}

\maketitle

\begin{abstract}
We study variants of the classic $s$-$t$ cut problem and prove the following improved hardness results assuming the Unique Games Conjecture (UGC). 
\begin{itemize}
\item For any constant $k \geq 2$ and $\epsilon > 0$, we show that {\em Directed Multicut} with $k$ source-sink pairs is hard to approximate within a factor $k - \epsilon$. This matches the trivial $k$-approximation algorithm. By a simple reduction, our result for $k = 2$ implies that {\em Directed Multiway Cut} with two terminals (also known as {\em $s$-$t$ Bicut}) is hard to approximate within a factor $2 - \epsilon$, matching the trivial $2$-approximation algorithm. 
Previously, the best hardness factor for these problems (for constant $k$) was $1.5 - \epsilon$~\cite{EVW13, CM16} under the UGC. 


\item For {\em Length-Bounded Cut} and {\em Shortest Path Interdiction}, we show that both problems are hard to approximate within any constant factor, even if we allow bicriteria approximation. If we want to cut vertices or the graph is directed, our hardness factor for Length-Bounded Cut matches the best approximation ratio up to a constant. 
Previously, the best hardness factor was $1.1377$ for Length-Bounded Cut~\cite{BEHKKPSS10} and $2$ for Shortest Path Interdiction~\cite{KBBEGRZ07}.

\item Assuming a variant of the UGC (implied by another variant of Bansal and Khot~\cite{BK09}), we prove that it is hard to approximate {\em Resource Minimization Fire Containment} within any constant factor. Previously, the best hardness factor was $2$~\cite{KM10}. 
\end{itemize}

Our results are based on a general method of {\em converting an integrality gap instance to a length-control dictatorship test} for variants of the $s$-$t$ cut problem, which may be useful for other problems.
\end{abstract}

\thispagestyle{empty}

\newpage

\section {Introduction}
One of the most important implications of the Unique Games Conjecture (UGC,~\cite{Khot02}) is the results of Khot et al.~\cite{KKMO07} and Raghavendra~\cite{Raghavendra08}, which say that for any maximum constraint satisfaction problem (Max-CSP), an integrality gap instance of the standard semidefinite programming (SDP) relaxation can be converted to the NP-hardness result with the same gap. These results initiated the study of beautiful connections between power of convex relaxations and hardness of approximation, from which surprising results for both subjects have been discovered. 

While their results hold for problems in Max-CSPs, the framework of {\em converting an integrality gap instance to hardness} has been successfully applied to covering and graph cut problems.
For graph cut problems, Manokaran et al.~\cite{MNRS08} showed that for {\em Undirected Multiway Cut} and its generalizations, an integrality gap of the standard linear programming (LP) relaxation implies the hardness result assuming the UGC. Their result is further generalized by Ene et al.~\cite{EVW13} by formulating them as {\em Min-CSPs}. 
On the other hand, Kumar et al.~\cite{KMTV11} studied {\em Strict CSPs} and showed the same phenomenon for the standard LP relaxation. 

One of the limitations of the previous CSP-based transformations from LP gap instances to hard instances is based on the fact that they do not usually preserve the desired structure of the constraint hypergraph.\footnote{One of notable exceptions we are aware is the result of Guruswami et al.~\cite{GSS15}, using Kumar et al.~\cite{KMTV11} to show that $k$-Uniform $k$-Partite Hypergraph Vertex Cover is hard to approximate within a factor $\frac{k}{2} - \epsilon$ for any $\epsilon > 0$.} For example, consider the {\em Length-Bounded Edge Cut} problem where the input consists of a graph $G = (V, E)$, two vertices $s, t \in V$, and a constant $l \in \NN$, and the goal is to remove the fewest edges to ensure there is no path from $s$ to $t$ of length less than $l$. 
This problem can be viewed as a special case of {\em Hypergraph Vertex Cover} (HVC)
by viewing each edge as a vertex of a hypergraph and creating a hyperedge for every $s$-$t$ path of length less than $l$. 
While HVC is in turn a Strict CSP, but its integrality gap instance cannot be converted to hardness using Kumar et al.~\cite{KMTV11} as a black-box, since the set of hyperedges created in the resulting hard instance is not guaranteed to correspond to the set of short $s$-$t$ paths of some graph. 

For Undirected Multiway Cut, Manokaran et al.~\cite{MNRS08} bypassed this difficulty by using $2$-ary constraints so that the resulting constraint hypergraph becomes a graph again. 
For {\em Undirected Node-weighted Multiway Cut}, Ene et al.~\cite{EVW13} used the equivalence to {\em Hypergraph Multiway Cut}~\cite{OFN12} so that the resulting hypergraph does not need to satisfy additional structure. 
These problems are then formulated as a Min-CSP by using many labels which are supposed to represent different connected components. For {\em Directed Multiway Cut}, to the best of our knowledge, the existence of an analogous formulation as a Min-CSP is unknown. 

We study variants of the classical $s$-$t$ cut problem in both directed and undirected graphs that have been actively studied, including the aforementioned Length-Bounded Cut and Directed Multiway Cut. 
We prove the optimal hardness or the first super-constant hardness for them. 
See Section~\ref{subsec:results} for the definitions of the problems and our results. 
All our results are based on the general framework of converting an integrality gap instance to a {\em length-control dictatorship test}. 
The structure of our length-control dictatorship tests allows us to naturally convert an integrality gap instance for the basic LP for various cut problems to hardness based on the UGC. 
Section~\ref{subsec:techniques} provides more detailed intuition of this framework. 
While these problems have slightly different characteristics that make it hard to present the single result for a wide class of problems like CSPs, we hope that our techniques may be useful to prove hardness of other cut problems.

\subsection{Problems and Results}
\label{subsec:results}

\paragraph{Directed Multicut and Directed Multiway Cut.}
Given a directed graph and two vertices $s$ and $t$, one of the most natural variants of $s$-$t$ 
cut is to remove the fewest edges to ensure that there is no directed path from $s$ to $t$ and no directed path from $t$ to $s$. This problem is known as {\em $s$-$t$ Bicut} and admits the trivial $2$-approximation algorithm by computing the minimum $s$-$t$ cut and $t$-$s$ cut. 

{\em Directed Multiway Cut} is a generalization of $s$-$t$ Bicut that has been actively studied. Given a directed graph with $k$ terminals $s_1, \dots, s_k$, the goal is to remove the fewest number of edges such that there is no path from $s_i$ to $s_j$ for any $i \neq j$. Directed Multiway Cut also admits $2$-approximation~\cite{NZ01, CM16}. 
If $k$ is allowed to increase polynomially with $n$, there is a simple reduction from Vertex Cover that shows $(2 - \epsilon)$-approximation is hard under the UGC~\cite{GVY94, KR08}. 

Directed Multiway Cut can be further generalized to {\em Directed Multicut}. Given a directed graph with $k$ source-sink pairs $(s_1, t_1), \dots, (s_k, t_k)$, the goal is to remove the fewest number of edges such that there is no path from $s_i$ to $t_i$ for any $i$. 
Computing the minimum $s_i$-$t_i$ cut for all $i$ separately gives the trivial $k$-approximation algorithm. 
Chuzhoy and Khanna~\cite{CK09} showed Directed Multicut is hard to approximate within a factor $2^{\Omega(\log^{1 - \epsilon} n)} = 2^{\Omega(\log^{1 - \epsilon} k)}$ when $k$ is polynomially growing with $n$. Agarwal et al.~\cite{AAC07} showed $\tilde{O}(n^{\frac{11}{23}})$-approximation algorithm, which improves the trivial $k$-approximation when $k$ is large.

Very recently, Chekuri and Madan~\cite{CM16} showed simple approximation-preserving reductions from Directed Multicut with $k = 2$ to $s$-$t$ Bicut (the other direction is trivially true), and {\em (Undirected) Node-weighted Multiway Cut} with $k = 4$ to $s$-$t$ Bicut. 
Since Node-weighted Multiway Cut with $k = 4$ is hard to approximate within a factor $1.5 - \epsilon$ under the UGC~\cite{EVW13} (matching the algorithm of Garg et al.~\cite{GVY94}), the same hardness holds for $s$-$t$ Bicut, Directed Multiway Cut, and Directed Multicut for constant $k$. To the best of our knowledge, $1.5 - \epsilon$ is the best hardness factor for constant $k$ even assuming the UGC. In the same paper, Chekuri and Madan~\cite{CM16} asked whether a factor $2 - \epsilon$ hardness holds for $s$-$t$ Bicut under the UGC. 

We prove that for any constant $k \geq 2$, the trivial $k$-approximation for Directed Multicut might be optimal. Our result for $k = 2$ gives the optimal hardness result for $s$-$t$ Bicut, answering the question of Chekuri and Madan. 

\begin{thm}
Assuming the Unique Games Conjecture, for every $k \geq 2$ and $\epsilon > 0$, Directed Multicut with $k$ source-sink pairs is NP-hard to approximate within a factor $k - \epsilon$. 
\label{thm:multicut}
\end{thm}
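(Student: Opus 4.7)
The plan is to follow the UGC-based ``integrality gap to dictatorship test'' paradigm highlighted in the introduction, specialized to Directed Multicut, and to establish the three standard ingredients: an integrality gap instance with ratio approaching $k$, a length-control dictatorship test built from that gap instance, and a reduction from Unique Games that composes the test in the usual KKMO/MNRS-style fashion.

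\textbf{Step 1 (Integrality gap).} First I would exhibit, for every $\delta > 0$, a directed graph $H_\delta$ with $k$ source-sink pairs $(s_1,t_1),\dots,(s_k,t_k)$ such that the natural path-covering LP
\[
\min \sum_{e \in E(H_\delta)} x_e \quad \text{s.t.} \quad \sum_{e \in P} x_e \geq 1 \; \forall \text{ directed } s_i\text{-}t_i \text{ path } P, \; x_e \geq 0
\]
has optimum $1$, while every integral multicut has size at least $k-\delta$. A convenient candidate is the ``directed cycle with $k$ antipodal pairs'' or a grid-like instance of length $\Theta(1/\delta)$ where the $k$ flow demands share edges symmetrically; setting each $x_e$ to the reciprocal of the common path length makes the LP feasible with value $1$, while a routing / LP-duality argument shows any integral cut must separate $k$ essentially disjoint bottlenecks.

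\textbf{Step 2 (Length-control dictatorship test).} I would convert $H_\delta$ into a dictatorship test as follows: replace each vertex $v \in V(H_\delta)$ by a ``cloud'' $v \times \{-1,1\}^R$ and each edge $(u,v)$ by a biased random matching between the clouds, where the bias corresponds to the LP weight $x_{(u,v)}$. The source-sink pairs of the test are the cloud-expansions of the $(s_i,t_i)$ pairs. The intended dictator solution, for a coordinate $i^\star \in [R]$, cuts precisely the edges of one matching class per edge of $H_\delta$, yielding total cost close to the LP value $1 + o(1)$. The ``length-control'' aspect is designed into the matching distribution so that every $s_i$-$t_i$ directed path in the test projects to an $s_i$-$t_i$ directed path in $H_\delta$ of the right length, preventing shortcuts.

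\textbf{Step 3 (Soundness).} This is the technical heart. Given a cut $C$ of cost less than $k - \epsilon$, I would pass to a fractional profile on each cloud-edge by averaging, then run a Gaussian / noise-stability invariance argument: either some coordinate $j$ has large influence on many cloud-edges (in which case $j$ decodes a good label), or else the cut behaves like a symmetric fractional solution, which by pulling back to $H_\delta$ yields an LP-feasible cover of cost less than $k - \epsilon$, contradicting the $k - \delta$ integrality gap. Making this step rigorous for a \emph{directed} cut is nontrivial because soundness must reason about the \emph{existence of short directed paths in a random graph}, not merely expectations of a Boolean function; this is the step where the ``length-control'' property of the gadget is essential.

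\textbf{Step 4 (UGC composition).} Finally I would plug the dictatorship test into the standard UG template: for each vertex of a UG instance take a copy of the test on variables relabelled by the UG projection, and identify source/sink vertices across copies consistently with the bipartite UG structure. Completeness follows from a good UG labeling yielding a dictator-cut of value $1+o(1)$; soundness follows from Step 3 plus the standard influence-decoding argument, which produces a UG labeling satisfying a constant fraction of edges whenever the cut has value less than $k-\epsilon$. The main obstacle throughout is Step 3: verifying that the soundness invariance argument still works when the combinatorial object being optimized over is a set of directed cut edges rather than the values of a CSP assignment, and that the length-control gadget robustly defeats potential shortcut paths through the product structure.
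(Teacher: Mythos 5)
Your proposal correctly identifies the high-level framework (LP gap instance $\to$ length-control dictatorship test $\to$ UGC composition), and your Step 4 matches the paper's standard composition. However, the two technical cores — the gadget and the soundness argument — have genuine gaps.

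\paragraph{The alphabet and dictator cut.} You propose clouds $\{-1,1\}^R$ with biased random matchings, with the dictator cut removing "one matching class per edge" at total cost $1+o(1)$. But you never explain the mechanism by which such a cut actually severs every $s_i$-$t_i$ path. The paper uses the larger alphabet $\Omega=\{0,\ldots,r-1,*\}$ and \emph{deterministic} directed edges that increment the $\Omega$-coordinate by $1 \bmod r$ (with $*$ as noise). The dictator cut is $\{x : x_q \in \{0,*\}\}$, of measure $\frac{1}{r}+\epsilon$. Completeness then follows purely combinatorially: any $s_i$-$t_i$ path in the Saks et al.\ gap instance has at least $r$ intermediate vertices, so the tracked coordinate $x_q$ must hit $0$ somewhere, and that vertex is cut (Lemma~\ref{lem:dict_multicut}). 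A Boolean alphabet cannot encode this cyclic bookkeeping, and a "biased matching" cut of cost $\approx 1/r$ per cloud has no reason to intersect every path. This is not a presentational detail — the $\mathbb{Z}_r$ structure is precisely what ties the dictator cost to the LP fractional cost of $1/r$ per vertex, and without it the completeness case fails.

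\paragraph{The soundness argument.} Your Step 3 says that a low-influence cut "pulls back to an LP-feasible cover of cost less than $k-\epsilon$, contradicting the $k-\delta$ integrality gap." This direction does not make sense: in your normalization the LP optimum is $1$, so an LP-feasible cover of cost $< k-\epsilon$ contradicts nothing. More fundamentally, the integrality gap bounds \emph{integral} solutions from below; it is not the right tool for arguing about averaged/fractional profiles. The paper's argument is instead direct and path-based: call a cloud $v^\alpha$ \emph{blocked} if at least a $1-\epsilon$ measure of it is cut, so that a cut of weight $< (1-\epsilon)k(r-1)^{k-1}$ blocks fewer than $k(r-1)^{k-1}$ clouds; then invoke the combinatorial Theorem~\ref{thm:ssz} of Saks et al.\ to find an $s_i$-$t_i$ path in the base instance whose clouds are all unblocked; then iterate the invariance principle (Theorem~\ref{thm:mossel}) along that path to show the reachable set in each successive cloud retains measure $\geq \epsilon/3$, so a surviving directed path exists. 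Your acknowledgment that "reasoning about existence of short directed paths" is the hard step is accurate, but the block-then-iterate mechanism that actually resolves it is absent from your sketch, and the LP-pullback route you suggest instead would not produce the contradiction you need.
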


\begin{cor}
Assuming the Unique Games Conjecture, for any $\epsilon > 0$, 
$s$-$t$ Bicut is hard to approximate within a factor $2 - \epsilon$. 
\end{cor}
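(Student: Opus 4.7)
The plan is to derive the corollary by composing Theorem~\ref{thm:multicut} with $k=2$ with the simple approximation-preserving reduction from Directed Multicut with two source-sink pairs to $s$-$t$ Bicut due to Chekuri and Madan~\cite{CM16}. Note that a reduction is really needed here: $s$-$t$ Bicut is not literally a special case of Directed Multicut with $k=2$, since Bicut forces the two pairs to be $(s,t)$ and $(t,s)$, whereas Theorem~\ref{thm:multicut} allows arbitrary pairs.

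Given an instance $(G=(V,E),(s_1,t_1),(s_2,t_2))$ of Directed Multicut, I would build an instance of $s$-$t$ Bicut on a graph $G'$ obtained from $G$ by adjoining two fresh vertices $s$ and $t$ and, for a large integer $B > |E|$, adding $B$ parallel copies of each of the four auxiliary edges $s \to s_1$, $t_1 \to t$, $t \to s_2$, and $t_2 \to s$. The role of the parallel copies is to make the auxiliary edges effectively uncuttable: any $s$-$t$ Bicut of cost at most $|E|$ must be a subset of the original edge set $E$.

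The key step is to check that for any $F \subseteq E$, $F$ is a feasible $s$-$t$ Bicut on $G'$ if and only if $F$ is a feasible Directed Multicut on $G$. One direction is clear, since disconnecting $s_i$ from $t_i$ in $G$ for $i\in\{1,2\}$ visibly kills all $s \to t$ and $t \to s$ walks in $G'\setminus F$ that are forced to pass through $(s_1,t_1)$ and $(s_2,t_2)$ respectively. For the converse, observe that the only edges leaving $s$ in $G'$ go to $s_1$ and the only edges entering $t$ come from $t_1$; hence any walk from $s$ to $t$ in $G'\setminus F$, no matter how many times it cycles back through $t_2\to s$ or $t \to s_2$, must contain a subwalk from $s_1$ to $t_1$ lying entirely in $G\setminus F$. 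A symmetric argument handles $t\to s$ walks using $s_2,t_2$. Since the reduction is value-preserving on cost-bounded solutions, Theorem~\ref{thm:multicut} with $k=2$ transfers verbatim to give $(2-\epsilon)$-hardness for $s$-$t$ Bicut under the UGC.

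I do not anticipate a substantive obstacle: the reduction is short and the objective is preserved exactly. The only point requiring minor care is the walk-based argument above, confirming that the auxiliary edges do not open up spurious $s \to t$ or $t \to s$ connectivity beyond what is forced by $s_1$-$t_1$ and $s_2$-$t_2$ reachability in $G$.
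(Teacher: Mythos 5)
Your proof is correct and matches the paper's intended argument: the paper obtains the corollary by composing Theorem~\ref{thm:multicut} with $k=2$ and the Chekuri--Madan reduction (which it cites without spelling out), and the gadget you describe --- adjoining fresh $s,t$ and effectively uncuttable edges $s\to s_1$, $t_1\to t$, $t\to s_2$, $t_2\to s$ --- is precisely that reduction. Your walk-based check that, for $F\subseteq E$, $F$ is an $s$-$t$ Bicut of $G'$ iff $F$ is a Directed Multicut of $G$ is sound, so the $(2-\epsilon)$-hardness transfers as claimed.
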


\paragraph{Length-Bounded Cut and Shortest Path Interdiction.}
Another natural variant of $s$-$t$ cut is the {\em Length-Bounded Cut} problem, where given an integer $l$, we only want to cut $s$-$t$ paths of length strictly less than $l$.\footnote{It is more conventional to cut $s$-$t$ paths of length {\em at most} $l$.  We use this slightly nonconventional way to be more consistent with Shortest Path Interdiction.} Its practical motivation is based on the fact that in most communication / transportation networks, short paths are preferred to be used to long paths~\cite{MM10}. 

Lov{\'a}sz et al.~\cite{LNP78} gave an exact algorithm for Length-Bounded Vertex Cut $(l \leq 5)$ in undirected graphs. Mahjoub and McCormick~\cite{MM10} proved that Length-Bounded Edge Cut admits an exact polynomial time algorithm for $l \leq 4$ in undirected graphs. 
Baier et al.~\cite{BEHKKPSS10} showed that both Length-Bounded Vertex Cut $(l > 5)$ and Length-Bounded Edge Cut $(l > 4)$ are NP-hard to approximate within a factor $1.1377$. They presented $O(\min(l, \frac{n}{l})) = O(\sqrt{n})$-approximation algorithm for Length-Bounded Vertex Cut and $O(\min(l, \frac{n^2}{l^2}, \sqrt{m})) = O(n^{2/3})$-approximation algorithm for Length-Bounded Edge Cut, with matching LP gaps. 
Length-Bounded Cut problems have been also actively studied in terms of their fixed parameter tractability~\cite{GT11, DK15, BNN15, FHNN15}. 

If we exchange the roles of the objective $k$ and the length bound $l$, the problem becomes {\em Shortest Path Interdiction}, where we want to maximize the length of the shortest $s$-$t$ path after removing at most $k$ vertices or edges. 
It is also one of the central problems in a broader class of {\em interdiction} problems, where an {\em attacker} tries to remove some edges or vertices to destroy a desirable property (e.g., short $s$-$t$ distance, large $s$-$t$ flow, cheap MST) of a network (see the survey of ~\cite{SPG13}). 
The study of Shortest Path Interdiction started in 1980's when the problem was called as the {\em $k$-most-vital-arcs} problem~\cite{CD82, MMG89, BGV89} and proved to be NP-hard~\cite{BGV89}. Khachiyan et al.~\cite{KBBEGRZ07} proved that it is NP-hard to approximate within a factor less than $2$. 
While many heuristic algorithms were proposed~\cite{IW02, BB08, Morton11} and hardness in planar graphs~\cite{PS13} was shown, whether the general version admits a constant factor approximation was still unknown. 





Given a graph $G = (V, E)$ and $s, t \in V$, let $\dist(G)$ be the length of the shortest $s$-$t$ path. For $V' \subseteq V$, let $G \setminus V'$ be the subgraph induced by $V \setminus V'$. For $E' \subseteq E$, we use the same notation $G \setminus E'$ to denote the subgraph $(V, E \setminus E')$. We primarily study undirected graphs. We first present our results for the vertex version of both problems (collectively called as {\em \vertexcut} onwards). 

\begin{thm}
Assume the Unique Games Conjecture. 
For infinitely many values of $l \in \NN$, 
given an undirected graph $G = (V, E)$ and $s, t \in V$ where there exists $C^* \subseteq V \setminus \{s, t \}$ such that $\dist(G \setminus C^*) \geq l$, it is NP-hard to perform any of the following tasks.
\begin{enumerate}
\item Find $C \subseteq V \setminus \{ s, t \}$ such that $|C| \leq \Omega(l) \cdot |C^*|$ and $\dist(G \setminus C) \geq l$.
\item Find $C \subseteq V \setminus \{ s, t \}$ such that $|C| \leq |C^*|$ and $\dist(G \setminus C) \geq O(\sqrt{l})$.
\item Find $C \subseteq V \setminus \{ s, t \}$ such that $|C| \leq \Omega(l^{\frac{\epsilon}{2}}) \cdot |C^*|$ and $\dist(G \setminus C) \geq O(l^{\frac{1 + \epsilon}{2}})$ for some $0 < \epsilon < 1$.
\end{enumerate}
\label{thm:vertex1}
\end{thm}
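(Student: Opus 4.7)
The plan is to instantiate the length-control dictatorship test framework promised in the introduction, using as its seed a known LP integrality gap for Length-Bounded Vertex Cut. Baier et al.~\cite{BEHKKPSS10} exhibit instances where the natural LP relaxation has gap $\Omega(\sqrt{l})$; by taking series/parallel compositions one can tune the gap up to $\Omega(l)$ at the cost of sharpening the length loss. I would start from such a gap instance $H$ on a vertex set $W$ with designated $s_H, t_H$, together with a fractional cut $x \in [0,1]^W$ of weight $\OPT_{LP}$ which fractionally kills every $s_H$--$t_H$ path of length $< l$.

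Next I would lift $H$ to a length-control dictatorship test. Replace each $w \in W$ by a cube of $2^R$ vertices indexed by $\{-1,+1\}^R$, and wire the cubes along the edges of $H$ so that every $s$-$t$ path of length $< l$ in the lifted graph projects to a short $s_H$--$t_H$ path of $H$, and so that any dictator cut on a coordinate $i$ (removing from each $w$-cube an $x_w$-fraction of vertices determined by coordinate $i$) simultaneously destroys all short lifted paths at total cost $\approx \OPT_{LP}$. The length-control property must additionally guarantee that any cut whose characteristic functions have small low-degree influences on every coordinate either pays close to the integer cut cost $\Omega(\sqrt{l})\cdot \OPT_{LP}$, or else leaves an $s$-$t$ path of length only $O(\sqrt{l})$. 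The intermediate trade-off of Part~(3) should then fall out by taking an appropriate tensor of the base gadget whose ``resolution'' is set to $l^{(1-\epsilon)/2}$.

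With the length-control test in hand, I would compose it with the standard Unique Games reduction: for each UG vertex $u$ attach one copy of the dictator gadget, and for each UG edge $(u,v)$ with permutation $\pi_{uv}$ glue the cubes along $\pi_{uv}$ through a noise gadget of $O(1)$ diameter. Completeness follows by translating any $(1-\eta)$-satisfying UG labeling into the corresponding dictator cuts, achieving $\dist \geq l$ with $\approx \OPT_{LP}$ vertices per UG-vertex. For soundness, the standard influence-decoding argument says that any cut which is ``too small'' and still achieves the claimed distance must concentrate influence on a few coordinates per UG-vertex, from which one extracts a labeling satisfying a non-negligible fraction of UG constraints, contradicting the UGC.

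The hard part will be the length-control property itself. A naive lifting introduces parasitic short paths whose length depends on $R$ and the noise wiring, so the gadget must be designed so that the diameter overhead per cube stays bounded by an absolute constant, while the noise-stability / invariance analysis (naturally phrased in terms of fractional weight) still translates cleanly into a path-counting statement of the form ``no short path survives an honest cut, but many short paths survive a dishonest one''. Reconciling the functional-analytic soundness proof with the combinatorial length bound is precisely what Section~\ref{subsec:techniques} refers to as length control, and is where the off-the-shelf Min-CSP reductions of Ene et al.~\cite{EVW13} and Kumar et al.~\cite{KMTV11} fail to apply as black boxes.
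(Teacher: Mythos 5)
Your high-level plan (LP gap $\to$ length-control dictatorship test $\to$ UG composition with influence decoding) does match the paper's framework, and you correctly flag the central unresolved issue: how to make the ``length-control'' property hold. But the details you do commit to are either wrong or materially different from what the paper does, and the step you leave open is precisely the content of the proof.

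First, the base integrality gap: for the vertex version, Baier et al.\ already give an LP gap of $\Omega(l)$, not $\Omega(\sqrt{l})$; the $\sqrt{\cdot}$-type gaps are for the edge version. There is no need for (and the paper uses no) series/parallel boosting.

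Second, the gadget alphabet is not $\{-1,+1\}^R$ but $\{0,\dots,r-1,*\}^R$, and this choice is not cosmetic. The ``increment mod $r$'' rule on short edges between consecutive layers is exactly what enforces a path-length lower bound from a dictator cut: if coordinate $q$ must step $+1$ mod $r$ on each short edge, then a cut that removes all vertices with $q$th coordinate in $\{0,*\}$ forces every short-edge run to have length $<r$. This modular counter is the length-control mechanism you were hoping exists but could not supply.

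Third, the gadget is not a lift of a Baier et al.\ instance. It is a custom layered graph $v^0,\dots,v^b$ with three edge types: \emph{short} edges of length $1$ between consecutive layers (via the increment rule), \emph{long} edges of length $(j-i)a$ between arbitrary layer pairs $i<j$ (via the same compatibility rule), and edges from $s,t$ to every layer $v^i$ of carefully chosen lengths $ai+1$ and $(b-i)a+1$. A careful accounting of how many short vs.\ long edges a surviving path must use (Lemma~\ref{lem:dict_vertex} and its soundness counterpart) gives the distance bound. This dual short/long edge structure, and the fact that the cut only affects short progress, is what defeats the ``parasitic short path'' problem you anticipated; noise is handled by the $*$ symbol inside the alphabet, not by a separate $O(1)$-diameter noise gadget glued between cubes.

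Fourth, Part~3 of the theorem does not come from tensoring a base gadget. It falls out of the same gadget by varying the long-edge scale $a$: taking $a=4$, $b=2r-2$ gives Part~1; taking $a=r$, $b=2r-2$ gives Parts~2--3, with the bicriteria trade-off corresponding to allowing $k'=O(r^{\epsilon})$ in the soundness bound $l' = 2 + (b - 4k') + 8ak'$. So the three parts are three parameter choices, not three constructions.

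In short, you have correctly restated the meta-strategy from the introduction, but the proposal reduces to ``construct a gadget with the length-control property,'' which is the theorem's actual technical content and is left unaddressed (indeed, you say so yourself). To close the gap you would need to (a) replace the Boolean cube with the modular alphabet $\{0,\dots,r-1,*\}$, (b) introduce the layered short/long edge structure with $s,t$ connected to every layer, (c) carry out the short-vs.-long edge counting for both completeness and soundness, and (d) derive all three parts by tuning $a$ rather than by tensoring.
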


The first result shows that Length Bounded Vertex Cut is hard to approximate within a factor $\Omega(l)$. This matches the best $\frac{l}{2}$-approximation up to a constant.~\cite{BEHKKPSS10}. 
The second result shows that Shortest Path Vertex Interdiction is hard to approximate with in a factor $\Omega(\sqrt{\OPT})$, and the third result rules out {\em bicriteria approximation} --- for any constant $c$, it is hard to approximate both $l$ and $|C^*|$ within a factor of $c$. 

The above results hold for directed graphs by definition. Our hard instances will have a natural {\em layered} structure, so it can be easily checked that the same results (up to a constant) hold for directed acyclic graphs. Since one vertex can be split as one directed edge, the same results hold for the edge version in directed acyclic graphs. 

For Length-Bounded Edge Cut and Shortest Path Edge Interdiction in undirected graphs (collectively called {\em Short Path Edge Cut} onwards), we prove the following theorems. 

\begin{thm}
Assume the Unique Games Conjecture. 
For infinitely many values of the constant $l \in \NN$, 
given an undirected graph $G = (V, E)$ and $s, t \in V$ where there exists $C^* \subseteq E$ such that $\dist(V \setminus C^*) \geq l$, it is NP-hard to perform any of the following tasks.
\begin{enumerate}
\item Find $C \subseteq E$ such that $|C| \leq \Omega(\sqrt{l}) \cdot |C^*|$ and $\dist(G \setminus C) \geq l$.
\item Find $C \subseteq E$ such that $|C| \leq |C^*|$ and $\dist(G \setminus C) \geq l^{\frac{2}{3}}$.
\item Find $C \subseteq E$ such that $|C| \leq \Omega(l^{\frac{2\epsilon}{3}}) \cdot |C^*|$ and $\dist(G \setminus C) \geq O(l^{\frac{2 + 2\epsilon}{3}})$ for some $0 < \epsilon < \frac{1}{2}$.
\end{enumerate}
\label{thm:edge1}
\end{thm}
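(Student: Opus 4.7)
The plan is to apply the length-control dictatorship test framework of this paper to an integrality gap instance of the standard LP relaxation for Length-Bounded Edge Cut in undirected graphs, in parallel with the proof of Theorem~\ref{thm:vertex1}, and then feed the resulting dictatorship test through the standard UGC-based reduction. The three items in the statement correspond to the same gap construction, read off at three different parameter settings.

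First, I would produce an LP integrality gap. The excerpt already notes that the standard LP for Length-Bounded Edge Cut has a matching gap for the $O(n^{2/3})$-approximation of Baier et al.~\cite{BEHKKPSS10}, and one checks that this gap is $\Omega(\sqrt{l})$ against the length bound $l$. A concrete realization is a layered graph with $\Theta(\sqrt{l})$ layers in which each layer is a $\Theta(\sqrt{l})$-regular bipartite expander: one unit of fractional flow can be spread along the edges so that the LP pays $O(1)$, while any integral cut severing all $s$-$t$ paths of length $< l$ is forced to delete $\Omega(\sqrt{l})$ edges.

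Second, I would convert this gap into a length-control dictatorship test as outlined in Section~\ref{subsec:techniques}. Each edge (or layer) of the gap instance is blown up into a long-code gadget whose vertices are indexed by $[R]^n$, and cuts in the composed instance correspond to boolean functions $f: [R]^n \to \{0,1\}$. In the Yes case, dictator choices $f(x)=x_i$ recover the LP value and cut every short $s$-$t$ path; in the No case, if every $f$ has low influence on every coordinate, a noise-stability / invariance argument forces either $|C| \geq \Omega(\sqrt{l}) \cdot |C^*|$ or the survival of an $s$-$t$ path of length $<l$. The gadget is laid out so that any $s$-$t$ traversal that mixes coordinates accumulates length proportional to the number of layers it visits, which is the length-control property. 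Standard composition with a Unique Games instance via KKMO~\cite{KKMO07} then turns this into NP-hardness under the UGC and gives item~1. For items~2 and~3, I would rescale: instead of aiming at length $l$, re-parameterize the gap construction to have $\Theta(l^{1/3})$ layers with $\Theta(l^{2/3})$-degree expanders (item~2), or $\Theta(l^{(1-\epsilon)/3})$ layers with $\Theta(l^{(2+2\epsilon)/3})$-degree expanders (item~3), so that the soundness distance and cut-size blowup land at the stated exponents.

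The main obstacle I expect is enforcing length control in the undirected edge setting. In the vertex / directed case of Theorem~\ref{thm:vertex1}, the layered orientation keeps shortcut paths from appearing; in the undirected edge case, a low-influence cut could in principle leave short "backward" edges inside a gadget that let a path bypass the intended layered structure, defeating the dictator-vs-noise distinction. The gadget must therefore be \emph{robustly directed}: any oscillation between adjacent layers must strictly increase path length, which is presumably why the achievable gap exponent is $\sqrt{l}$ rather than the $l$ attained in the vertex case, and why items~2 and~3 have the exponents $\tfrac{2}{3}$ and $\tfrac{2+2\epsilon}{3}$ in place of $\tfrac{1}{2}$ and $\tfrac{1+\epsilon}{2}$. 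Verifying this robust-direction property on the composed gadget, and showing that it survives the noise-stability rounding in the No case, is where I expect the bulk of the technical work to sit.
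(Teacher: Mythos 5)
Your high-level plan (integrality gap $\to$ length-control dictatorship test $\to$ UGC reduction) matches the paper, but the actual gadget and the parameterization are materially different, and several of your intuitions about where the $\sqrt{l}$ loss comes from are off target.

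The paper's dictatorship test $\calD^{\SE}_{a,b,r,R}$ is a path of $b+1$ layers $v^0,\dots,v^b$ where each consecutive pair $(v^i,v^{i+1})$ carries \emph{two} kinds of edges in parallel: cuttable ``short'' edges of length $1$ whose weights encode the correlated long-code distribution $\nu^{\otimes R}$, and an \emph{uncuttable} (weight $\infty$) ``long'' edge $(v^i_x,v^{i+1}_x)$ of length $a$ for each vertex. This long-edge bypass is the heart of the mechanism, and it is entirely absent from your proposal. Its role is that a dictator cut can block the mod-$r$ incrementing pattern on $q$-th coordinates, forcing any $s$-$t$ path to traverse at least $b-r+1$ long edges and hence have length $\geq a(b-r+1)$, while keeping cut weight $O(b/r)$; in soundness, every blocked layer can be bypassed at length cost only $a$, so a cheap cut can enforce only a short distance $2+(b-2k'\sqrt{r})+2ak'\sqrt{r}$. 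Your proposed gap instance (expander layers, no parallel uncuttable edges) does not give the LP the needed room to be fooled, nor does it admit the length-control accounting that makes Lemma~\ref{lem:dict_edge} go through.

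Your parameterization for items~2 and~3 is also wrong. In the paper all three items use the \emph{same} number of layers $b=2r-1$; items~2 and~3 are obtained not by changing the layer count or expander degree but by tuning the long-edge length $a$ (from $a=4$ for item~1 to $a=\sqrt{r}$ for items~2--3). Your proposal of ``$\Theta(l^{1/3})$ layers with $\Theta(l^{2/3})$-degree expanders'' would not reproduce the stated exponents under the paper's analysis. Finally, the drop from the vertex-case exponent $l$ to $\sqrt{l}$ in the edge case is not due to a ``robustly directed'' requirement against oscillation; the paper handles backward edges in both cases. It is a consequence of the weaker correlation bound for the \emph{edge-weighted} noise space: without the $*$-noise symbol available in the vertex construction, $\rho(\Omega_1,\Omega_2;\nu)\leq\sqrt{1-1/r}$, hence $\Gamma_\rho(\tfrac12,\tfrac12)\geq 1/\sqrt{r}$, so each unit of cut weight can block only $O(\sqrt{r})$ layers rather than $O(1)$. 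You would need to identify and exploit this correlated-space structure to make the soundness argument close.
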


Our hardness factors for the edge versions, $\Omega(\sqrt{l})$ for Length-Bounded Edge Cut and $\Omega(\sqrt[3]{\OPT})$ for Shortest Path Edge Interdiction, are slightly weaker than those for their vertex counterparts, but we are not aware of any approximation algorithm specialized for the edge versions. It is an interesting open problem whether there exist better approximation algorithms for the edge versions. 

\paragraph{RMFC.}
{\em Resource Minimization for Fire Containment (RMFC)} is a problem closely related to Length-Bounded Cut with the additional notion of time. Given a graph $G$, a vertex $s$, and a subset $T$ of vertices, consider the situation where fire starts at $s$ on Day $0$. For each Day $i$ ($i \geq 1$), we can {\em save} at most $k$ vertices, and the fire spreads from currently burning vertices to its unsaved neighbors. 
Once a vertex is burning or saved, it remains
so from then onwards. The process is terminated when the fire cannot spread anymore. 
RMFC asks to find a strategy to save $k$ vertices each day with the minimum $k$ so that no vertex in $T$ is burnt. These problems model the spread of epidemics or ideas through a social network, and have been actively studied recently~\cite{CC10, ACHS12, ABZ16, CV16}.

RMFC, along with other variants, is first introduced by Hartnell~\cite{Hartnell95}. 
Another well-studied variant is called the {\em Firefighter} problem, where we are only given $s \in V$ and want to maximize the number of vertices that are not burnt at the end. It is known to be NP-hard to approximate within a factor $n^{1 - \epsilon}$ for any $\epsilon > 0$~\cite{ACHS12}.
King and MacGillivray~\cite{KM10} proved that RMFC is hard to approximate within a factor less than $2$. Anshelevich et al.~\cite{ACHS12} presented an $O(\sqrt{n})$-approximation algorithm for general graphs, and Chalermsook and Chuzhoy~\cite{CC10} showed that RMFC admits $O(\log^* n)$-approximation in trees. Very recently, the approximation ratio in trees has been improved to $O(1)$~\cite{ABZ16}. Both Anshelevich et al.~\cite{ACHS12} and Chalermsook and Chuzhoy~\cite{CC10} independently studied directed layer graphs with $b$ layers, showing $O(\log b)$-approximation.

Our final result on RMFC assumes Conjecture~\ref{conj:ug2}, 
a variant of the Unique Games Conjecture which is not known to be equivalent to the original UGC. Given a bipartite graph as an instance of Unique Games, it states that in the completeness case, all constraints incident on $(1 - \epsilon)$ fraction of vertices in one side are satisfied, and in the soundness case, in addition to having a low value, every $\frac{1}{10}$ fraction of vertices on one side have at least a $\frac{9}{10}$ fraction of vertices on the other side as neighbors. Our conjecture is implied by the conjecture of Bansal and Khot~\cite{BK09} that is used to prove the hardness of {\em Minimizing Weighted Completion Time with Precedence Constraints} and requires a more strict expansion condition.
See Section~\ref{sec:ug} for the exact statement. 

\begin{thm}
Assuming Conjecture~\ref{conj:ug2}, it is NP-hard to approximate RMFC in undirected graphs within any constant factor. 
\label{thm:firefighter}
\end{thm}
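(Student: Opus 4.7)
The plan is to extend the length-control dictatorship test framework underlying Theorems~\ref{thm:vertex1} and~\ref{thm:edge1} to produce a test whose cut not only kills short $s$--$T$ paths but also respects a per-layer budget, so that it doubles as a hard instance of RMFC. I would then compose with Conjecture~\ref{conj:ug2} in the standard UG-to-hardness style, with the expansion hypothesis of the conjecture playing an essential role.

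First, I would construct a layered graph $G$ with $b$ layers corresponding to Days $1,\dots,b$. For each vertex $u$ of the Unique Games instance, the gadget contributes a slab of long-code-like hypercube vertices distributed across the $b$ layers, so that in the completeness case a \emph{dictator} labeling picks one coordinate $r\in[R]$ per UG vertex and saves exactly $k$ vertices in each layer (the coordinate-$r$ slab). The fire starts at $s$ in layer $0$ and the targets sit in layer $b+1$; by induction on the layer index, saving the coordinate-$r$ slab in layer $i$ by Day $i$ prevents the fire from reaching layer $i+1$, giving a feasible RMFC schedule with budget $k$.

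For soundness, suppose an RMFC solution uses at most $Ck$ saves per day for some constant $C$. Viewed layer by layer, each layer's cut is a low-measure Boolean function on the UG-indexed hypercubes. I would decode a UG labeling by taking, for each UG vertex, those coordinates on which some layer's cut has nontrivial influence. If the cheating were concentrated on only a few layers, the per-layer cap would already be violated; the only remaining option is that the cheating is smeared across a constant fraction of the layers. The expansion condition of Conjecture~\ref{conj:ug2} then forces that whenever a constant fraction of one side of the UG instance contributes high-influence coordinates, a constant fraction of the other side must too, because otherwise a $\tfrac{1}{10}$-fraction of the small side would fail to see the $\tfrac{9}{10}$-fraction of neighbors mandated by the conjecture. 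Standard unique-games soundness arguments then extract a labeling of nontrivial value, contradicting the soundness of the UG instance.

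The step I expect to be hardest is calibrating the per-layer budget so that it is exactly tight for dictators yet robustly infeasible for any low-influence or ``spread out'' function, while still preserving the short-path structure used to route fire through the gadget. In the Length-Bounded Cut proofs of Theorems~\ref{thm:vertex1} and~\ref{thm:edge1} one only needs to bound the total cut size, so local cheating concentrated in a single layer is harmless; under RMFC, the scheduling constraint demands that the adversary distribute the cut almost uniformly across $b$ layers, and one must rule out the possibility that the adversary hides a cheap schedulable cut on a small set of UG vertices. The structural use of Conjecture~\ref{conj:ug2} rather than plain UGC is precisely to close this gap, which is also the obstacle that leaves only a factor-$2$ hardness in~\cite{KM10}.
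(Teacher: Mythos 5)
Your high-level plan---a layered dictatorship-test gadget, completeness via dictator saves, soundness via UG decoding with the expansion condition of Conjecture~\ref{conj:ug2}---is the right shape, but the proposal as written has two genuine gaps.

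\textbf{The completeness argument is wrong as stated.} You claim that ``saving the coordinate-$r$ slab in layer $i$ by Day~$i$ prevents the fire from reaching layer $i+1$.'' In the actual gadget $\calD^{\SF}_{b,R,\epsilon}$, the vertices saved on Day~$i$ (the interval $\{x : x_q \in \{*\} \cup [B_{i-1}+1, B_i]\}$ inside layer~$i$) constitute only a small fraction of that layer, so the fire absolutely does propagate past layer~$i$. The correct argument is global, not layer-by-layer: the compatibility relation ($y_j=x_j$, or one side is $*$) together with the fact that all $*$-coordinates are saved forces every surviving fire path to have a \emph{constant} value $y=(x^j)_q$ along its length; that single value $y$ falls in exactly one harmonic interval $[B_{r-1}+1,B_r]$, and the path is killed at layer $r$ by the slab $V^r_q$, which was saved by Day~$r$ before the fire could arrive there. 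Saving per-layer slabs does not ``seal off'' layers; it pre-emptively blocks each possible coordinate-value before the fire reaches the corresponding layer.

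\textbf{The key quantitative idea is missing, so the proposal cannot produce a super-constant gap.} The paper assigns weight $\w(v^i_x) = i\cdot\mu^{\otimes R}(x)$, so layer~$i$ has total weight~$i$, and chooses the intervals so that $|B_i - B_{i-1}|/B = 1/(H_b\, i)$. The product is $1/H_b$ per day in completeness, i.e.\ $O(1/\log b)$. The linear layer-weighting is essential: the adversary accumulates budget $ik'$ by Day~$i$, and since layer~$i$ has weight~$i$, accumulated budget can never cover more than a $k'$-fraction of layer~$i$. Without this weighting (as in your description, where each layer looks symmetric and the dictator saves ``exactly $k$ vertices'' per layer), an adversary can stockpile budget over $b$ days and fully seal a single layer, destroying the soundness. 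The harmonic-number trick is precisely what lets the completeness budget shrink to $O(1/\log b)$ while soundness still requires $\Omega(1)$ per day; your proposal doesn't identify where the super-constant gap would come from, and indeed you flag ``calibrating the per-layer budget'' as the step you haven't solved.

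Your instinct that Conjecture~\ref{conj:ug2}'s expansion is needed to close a gap that plain UGC cannot is correct in spirit, but the mechanism in the paper is different from what you describe: rather than detecting that ``a constant fraction of layers are cheating,'' the paper fixes the first day $i^*$ on which the burning fraction crosses $1/2$, defines heavy $w$'s and reachable $w$'s, and uses expansion twice to show that if heavy vertices reveal no influential coordinate then $\E_{w,x}[f_w]$ would exceed a bound it provably cannot---a contradiction that forces enough influential coordinates to decode a UG labeling. You would need to fill in an argument of roughly that form.
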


Again, our reduction has a natural layered structure and the result holds for directed layered graphs. With $b$ layers, we prove that it is hard to approximate with in a factor $\Omega(\log b)$, matching the best approximation algorithms~\cite{CC10, ACHS12}.

\subsection{Techniques}
\label{subsec:techniques}
All our results are based on a general method of {\em converting an integrality gap instance to a dictatorship test}. This method has been successfully applied by Raghavendra~\cite{Raghavendra08} for Max-CSPs, 
Manokaran et al.~\cite{MNRS08} and Ene et al.~\cite{EVW13} for Multiway Cut and Min CSPs, and Kumar et al.~\cite{KMTV11} for strict CSPs, and by Guruswami et al.~\cite{GSS15} for $k$-uniform $k$-partite Hypergraph Vertex Cover. 
As mentioned in the introduction, the previous CSP-based results do not generally preserve the structure of constraint hypergraphs or use ingenious and specialized tricks to reduce the problem to a CSP,  so they are not applicable as a black-box to the graph cut problems we consider.

We bypass this difficulty by constructing a special class of dictatorship tests that we call {\em length-control dictatorship tests}. 
Consider a meta-problem where given a directed graph $G = (V, E)$, some terminal vertices, and a set $\mathcal{P}$ of {\em desired paths} between terminals, we want to remove the fewest number of non-terminal vertices to cut every path in $\mathcal{P}$.
The integrality gap instances we use in this work~\cite{SSZ04, BEHKKPSS10, MM10, CC10} share the common feature that every $p \in \mathcal{P}$ is of length at least $r$, and the fractional solution cuts $\frac{1}{r}$ fraction of each non-terminal vertex so that each path $p \in \mathcal{P}$ is cut. This gives a good LP value, and additional arguments are required to ensure that there is no efficient integral cut.

Given such an integrality gap instance, we construct our dictatorship test instance as follows. 
We replace every non-terminal vertex by a {\em hypercube} $\ZZ_r^R$ and put edges such that for two vertices $(v, x)$ and $(w, y)$ where $v, w \in V$ and $x, y \in \ZZ_r^R$, there is an edge from $(v, x)$ to $(w, y)$ if (1) $(v, w) \in E$ and (2) $y_j = x_j + 1$ for all $j \in [R]$. 
The set of desired paths $\mathcal{P'}$ is defined to be $\{ (s, (v_1, x_1), \dots, (v_l, x_l), t) : (s, v_1, \dots, v_l, t) \in \mathcal{P} \}$ ($s, t$ denote some terminals). Note that each path in $\mathcal{P'}$ is also of length at least $r$.
We want to ensure that in the {\em completeness case} (i.e., every hypercube reveals the same {\em influential coordinate}), there is a very efficient cut, while in the {\em soundness case} (i.e., no hypercube reveals an influential coordinate), there is no such efficient cut.

In the completeness case, let $q \in [R]$ be an influential coordinate. 
For each vertex $(v, x)$ where $v \in V, x \in \ZZ_r^R$, remove $(v, x)$ if $x_{q} = 0$. 
Consider a desired path $p = (s, (v_1, x_1), \dots, (v_l, x_l), t) \in \mathcal{P'}$ for some terminals $s, t$ and some $v_j \in V, x_j \in \ZZ_r^R \, \, (1 \leq j \leq l)$, and let $y_j = (x_j)_q$. By our construction, $y_{j + 1} = y_j + 1$ for $0 \leq j < l$. 
Since $p$ is desirable, $l \geq r$, so there exists $j$ such that $y_j = (x_j)_q = 0$, but $(v_j, x_j)$ is already removed by our previous definition. Therefore, every desired path is cut by this vertex cut. Note that this cut is integral and cuts exactly $\frac{1}{r}$ fraction of non-terminal vertices. This corresponds to the fractional solution to the gap instance that cuts $\frac{1}{r}$ fraction of every vertex.

For the soundness analysis, our final dictatorship test has additional {\em noise} vertices and edges to the test defined above. If no hypercube reveals an influential coordinate, the standard application of the invariance principle~\cite{Mossel10} proves that we can always take an edge between two hypercubes unless we almost completely cut one hypercube. We can then invoke the proof for the integrality gap instance to show that there is no efficient cut.

This idea is implicitly introduced by the work of Svensson~\cite{Svensson13} for Feedback Vertex Set (FVS) and DAG Vertex Deletion (DVD) by applying the {\em It ain't over till it's over theorem} to ingeniously constructed dictatorship tests with auxiliary vertices. Guruswami and Lee~\cite{GL16} gave a simpler construction and a new proof using the invariance principle instead of the It ain't over till it's over theorem. Our results are based on the observation that length-control dictatorship tests and LP gap instances {\em fool} algorithms in a similar way for various cut problems as mentioned above, so that the previous LP gap instances can be plugged into our framework to prove matching hardness results.

This method for the above meta-problem can be almost directly applied to Directed Multicut. For Length-Bounded Cut and RMFC in undirected graphs, we use the fact that the known integrality gap instances have a natural layered structure with $s$ in the first layer and $t$ in the last layer. Every edge is given a natural orientation, and the similar analysis can be applied. 
For Length-Bounded Cut, another set of edges called {\em long edges} are added to the dictatorship test. More technical work is required for edge cut versions in undirected graphs (\edgecutn), and the notion of time (RMFC). 

Our framework seems general enough so that they can be applied to integrality gap instances to give strong hardness results. Since each problem has slightly different characteristics as mentioned above, each application needs some specialized ideas and sometimes leads to sub-optimal results. It would be interesting to further abstract this method of converting integrality gap instances to length-bounded dictatorship tests, as well as to apply it to other problems whose approximability is not well-understood.

\section{Preliminaries}
\paragraph{Graph Terminologies.}
Depending on whether we cut vertices or edges, we introduce weight $\w(v)$ for each vertex $v$, or weight $\w(e)$ for each edge $e$. 
Some weights can be $\infty$, which means that some vertices or edges cannot be cut. For vertex-weighted graphs, we naturally have $\w(s) = \w(t) = \infty$. 
To reduce the vertex-weighted version to the unweighted version, we duplicate each vertex according to its weight and replace each edge by a complete bipartite graph between corresponding copies.
To reduce the edge-weighted version to the unweighted version, we replace a single edge with parallel edges according to its weight. To reduce to simple graphs, we split each parallel into two edges by introducing a new vertex. 

For the Length-Bounded Cut problems, we also introduce length $\len(e)$ for each edge $e$. 
It can be also dealt with serially splitting an edge according to its weight. 
We allow weights to be rational numbers, but as our hardness results are stated in terms of the length, all lengths in this work will be a positive integer. 

For a path $p$, depending on the context, we abuse notation and interpret it as a set of edges or a set of vertices. The length of $p$ is always defined to be the number of edges.

\paragraph{Gaussian Bounds for Correlated Spaces.}
We introduce the standard tools on correlated spaces from Mossel~\cite{Mossel10}.
Given a probability space $(\Omega, \mu)$ (we always consider finite probability spaces), let $\mathcal{L}(\Omega)$ be the set of functions $\left\{ f : \Omega \rightarrow \RR \right\}$ and for an interval $I \subseteq \RR$, 
$\mathcal{L}_I (\Omega)$ be the set of functions $\left\{ f : \Omega \rightarrow I \right\}$. For a subset $S \subseteq \Omega$, define {\em measure} of $S$ to be $\mu(S) := \sum_{\omega \in S} \mu(\omega)$. 
A collection of probability spaces are said to be correlated if there is a joint probability distribution on them. We will denote $k$ correlated spaces $\Omega_1,\dots,\Omega_k$ with a joint distribution $\mu$ as $(\Omega_1 \times \cdots \times \Omega_k , \mu)$.

Given two correlated spaces $(\Omega_1 \times \Omega_2, \mu)$, we define the correlation between $\Omega_1$ and $\Omega_2$ by 
\[
\rho(\Omega_1, \Omega_2; \mu) := \sup \left\{ \Cov[f, g] : f \in \mathcal{L}(\Omega_1), g \in \mathcal{L}(\Omega_2), \Var[f] = \Var[g] = 1 \right\}.
\]
Given a probability space $(\Omega, \mu)$ and a function $f \in \mathcal{L}(\Omega)$ and $p \in \RR^+$, let $\|f\|_{p} := \E_{x \sim \mu}[|f(x)|^p]^{1/p}$. 

Consider a product space $(\Omega^R, \mu^{\otimes R})$ and $f \in \mathcal{L}(\Omega^R)$. 
The {\em Efron-Stein decomposition} of $f$ is given by
\[
f(x_1, \dots , x_R) = \sum_{S \subseteq [R]} f_S(x_S)
\]
where (1) $f_S$ depends only on $x_S$ and (2) 
for all $S \not \subseteq S'$ and all $x_{S'}$, $
\E_{x' \sim \mu^{\otimes R}} [f_S(x') | x'_{S'} = x_{S'}] = 0$.
The {\em influence} of the $i$th coordinate on $f$ is defined by 

\[
\quad \Inf_i[f] := \E_{x_1, \dots , x_{i - 1}, x_{i + 1}, \dots , x_R} [\Var_{x_i} [f(x_1, \dots , x_R)].
\]
The influence has a convenient expression in terms of the Efron-Stein decomposition. 
\[
\Inf_i[f]  = \| \sum_{S : i \in S} f_S \|_2^2 = 
\sum_{S : i \in S} \| f_S \|_2^2.
\]
We also define the {\em low-degree influence} of the $i$th coordinate.
\[
\mathsf{Inf}_i^{\leq d}[f] := 
\sum_{S : i \in S, |S| \leq d} \| f_S \|_2^2.
\]
For $a, b \in [0, 1]$ and $\rho \in (0, 1)$, let 
\[
\Gamma_{\rho}(a, b) := \Pr[X \leq \Phi^{-1}(a), Y \geq \Phi^{-1}(1 - b)],
\]
where $X$ and $Y$ are $\rho$-correlated standard Gaussian variables and $\Phi$ denotes the cumulative distribution function of a standard Gaussian. 
The following theorem bounds the product of two functions that do not share an influential coordinate in terms of their Gaussian counterparts. 

\begin{thm}[Theorem 6.3 and Lemma 6.6 of~\cite{Mossel10}]
Let $(\Omega_1 \times \Omega_2, \mu)$ be correlated spaces such that
the minimum nonzero probability of any atom in $\Omega_1 \times \Omega_2$ is at least $\alpha$ and such
that $\rho(\Omega_1, \Omega_2; \mu) \leq \rho$. 
Then for every $\epsilon > 0$ there exist $\tau, d$ depending on $\epsilon$ and $\alpha$ such that if $f : \Omega_1^R \rightarrow [0, 1], g : \Omega_2^R \rightarrow [0, 1]$ satisfy
$
\min(\Inf_i^{\leq d}[f], \Inf_i^{\leq d}[g]) \leq \tau
$
for all $i$, then
$
\E_{(x, y) \in \mu^{\otimes R}} [f(x)g(y)] \geq \Gamma_{\rho}(\E_x[f], \E_y [g]) - \epsilon.
$
\label{thm:mossel}
\end{thm}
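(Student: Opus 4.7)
The plan is to reduce the problem to a Gaussian analog via the invariance principle and then invoke Borell's Gaussian noise stability theorem. First, expand $f$ and $g$ in their Efron--Stein decompositions $f = \sum_S f_S$ and $g = \sum_T g_T$. Since $\mu^{\otimes R}$ is a product across the $R$ coordinates, only matching index sets contribute, giving $\E_{(x,y)\sim\mu^{\otimes R}}[f(x)g(y)] = \sum_S \langle f_S, g_S\rangle_\mu$, where each inner product is a tensor of single-coordinate cross-correlations between $\Omega_1$ and $\Omega_2$. The hypothesis $\rho(\Omega_1,\Omega_2;\mu) \le \rho$ precisely bounds how big each of these single-coordinate cross-terms can be, so the situation mirrors the Gaussian picture with $\rho$-correlated standard Gaussians.

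Next, I would smooth $f$ and $g$ by a noise operator $T_{1-\eta}$ for $\eta = \eta(\epsilon)$. This barely perturbs $\E[f]$, $\E[g]$, or $\E[f(x)g(y)]$, but exponentially damps high-degree Efron--Stein components so we can truncate at degree $d = d(\epsilon,\alpha)$ up to error $O(\epsilon)$. Then apply the multi-function invariance principle for correlated spaces (following Mossel's hybrid argument) to the truncated pair: coordinate by coordinate, replace the discrete pair $(x_i,y_i)$ with a $\rho$-correlated pair of standard Gaussians $(X_i,Y_i)$ whose joint covariance matches the second-moment structure. The asymmetric low-influence hypothesis $\min(\Inf_i^{\le d}[f], \Inf_i^{\le d}[g]) \le \tau$ for every $i$ is exactly what allows the hybrid step to go through: at each coordinate we substitute on the side whose low-degree influence is small, and the cross-error introduced is damped by $\rho(\Omega_1,\Omega_2;\mu) \le \rho$. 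Hypercontractivity on $(\Omega_i,\mu_i)$, which kicks in for low-degree polynomials with control depending on the minimum atom probability $\alpha$, provides the fourth-moment bounds needed to convert small $L^2$-influence into small $L^4$ perturbations. Choosing $\tau$ small in terms of $\epsilon,\alpha,d$ yields multilinear polynomials $P,Q$ of $\rho$-correlated Gaussians with $\E[P] \approx \E[f]$, $\E[Q] \approx \E[g]$, and $\E[P(X)Q(Y)] \approx \E[f(x)g(y)]$ up to $O(\epsilon)$.

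Finally, project $P,Q$ back into $[0,1]$-valued functions $F,G$ by clipping. Because invariance also gives distributional closeness of $P$ to $f \in [0,1]$, the mass of $P$ outside $[0,1]$ is at most $O(\epsilon)$, so clipping changes $\E[P]$ and $\E[P(X)Q(Y)]$ only negligibly. Now invoke Borell's theorem: for any measurable $F,G:\RR^R \to [0,1]$ with $\E[F]=a$, $\E[G]=b$, and $\rho$-correlated Gaussian inputs, $\E[F(X)G(Y)] \ge \Gamma_\rho(a,b)$, with equality for indicators of parallel halfspaces. Monotonicity and Lipschitz-continuity of $\Gamma_\rho$ in its arguments absorb the $O(\epsilon)$ slack between $(\E[F],\E[G])$ and $(\E[f],\E[g])$, giving the claimed bound $\E[f(x)g(y)] \ge \Gamma_\rho(\E[f],\E[g]) - \epsilon$ after collecting all error contributions.

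The hard part will be the invariance step under the asymmetric low-influence condition. The classical invariance principle requires all influences of a single function to be small, but here only the coordinate-wise minimum over $\{f,g\}$ is controlled. Making the hybrid argument work therefore requires carefully deciding, at each coordinate, which function to swap, and leveraging $\rho(\Omega_1,\Omega_2;\mu)\le\rho<1$ to ensure the bilinear errors at a coordinate where only one side has low influence still telescope. Pinning down the quantitative dependence $\tau,d = \tau(\epsilon,\alpha), d(\epsilon,\alpha)$ and verifying that $\alpha$ enters solely through hypercontractivity constants for low-degree polynomials on the factor spaces is where most of the technical labor lies.
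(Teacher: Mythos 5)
The paper does not prove this statement: it is imported verbatim as a black-box tool from Mossel~\cite{Mossel10} (Theorem~6.3 combined with Lemma~6.6 there) and never re-derived, so there is no ``paper proof'' to compare against. Your sketch is a faithful reconstruction of Mossel's actual argument in the cited reference --- Efron--Stein cross-orthogonality across coordinates, noise-smoothing to truncate to low degree, the hybrid-style invariance principle in which at each coordinate you swap on whichever side has small low-degree influence (this is exactly how the asymmetric $\min(\Inf_i^{\leq d}[f],\Inf_i^{\leq d}[g])\leq\tau$ hypothesis is used, with $\rho(\Omega_1,\Omega_2;\mu)\leq\rho<1$ controlling the bilinear cross-error), hypercontractivity with constants depending on the minimum atom probability $\alpha$, clipping to $[0,1]$, Borell's Gaussian noise-stability bound, and Lipschitz continuity of $\Gamma_\rho$ to absorb the slack --- so it matches the source's approach.
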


\paragraph{Organization.}
Section~\ref{sec:test_multicut} shows the dictatorship tests for Directed Multicut.
We present our dictatorship tests for \edgecut and \vertexcut in Section~\ref{sec:test_edge} and Section~\ref{sec:test_vertex} respectively. The dictatorship tests for RMFC are presented in Section~\ref{sec:test_rmfc}. 
These tests will be used in Section~\ref{sec:ug} to prove hardness results based on the UGC.

\section{Directed Multicut}
\label{sec:test_multicut}
We propose our dictatorship test for Directed Vertex Multicut that will be used for proving Unique Games hardness. 
Note that hardness of Directed Edge Multicut easily follows from that of the vertex version by splitting each vertex.
Our dictatorship test is inspired by the integrality gap for the standard LP constructed by Saks et al.~\cite{SSZ04}, and parameterized by positive integers $r, k, R$ and small $\epsilon > 0$, where $k$ in this section denotes the number of $(s_i, t_i)$ pairs for Directed Multicut. All graphs in this section are directed. 

For positive integers $r, k, R$, and $\epsilon > 0$, define $\calD^{\SM}_{r, k, R, \epsilon} = (V, E)$ be the graph defined as follows. Consider the probability space $(\Omega, \mu)$ where $\Omega := \{ 0, \dots, r-1, * \}$, and $\mu : \Omega \mapsto [0, 1]$ with $\mu(*) = \epsilon$ and $\mu(x) = \frac{1 - \epsilon}{r}$ for $x \neq *$. 

\begin{itemize}
\item $V = \{ s_i, t_i \}_{1 \leq i \leq k} \cup \{ v^{\alpha}_x \}_{\alpha \in [r]^k, x \in \Omega^R}$. Let $v^{\alpha}$ denote the set of vertices $\{ v^{\alpha}_x \}_{x \in \Omega^R}$. 

\item For $\alpha \in [r]^k$ and $x \in \Omega^R$, $\w(v^i_x) = \mu^{\otimes R}(x)$. Note that the sum of weights is $r^k$. 

\item For any $i \in [k]$, there are edges from $s_i$ to $\{ v^{\alpha}_x: \alpha \in [r]^k, \alpha_i = 1, x \in \Omega^R \}$, and edges from $\{ v^{\alpha}_x: \alpha \in [r]^k, \alpha_i = r, x \in \Omega^R \}$ to $t_i$. 

\item For $\alpha, \beta \in [r]^k$ and $x, y \in \Omega^R$, we have an edge from $v^{\alpha}_x$ to $v^{\beta}_y$ if $\alpha \neq \beta$ and 
\begin{itemize}
\item For any $1 \leq i \leq r$: $\alpha_i - \beta_i \in \{ -1, 0, +1 \}$. 
\item For any $1 \leq j \leq R$: [$y_j = (x_j + 1)\mod r$] or [$y_j = *$] or [$x_j = *$]. 
\end{itemize}
\end{itemize}

\paragraph{Completeness.} 
We first prove that vertex cuts that correspond to {\em dictators} behave the same as the fractional solution that gives $\frac{1}{r}$ to every vertex. For any $q \in [R]$, let $V_q := \{ v^{\alpha}_x : \alpha \in [r]^k, x_q = * \mbox{ or } 0 \}$. Note that the total weight of $V_q$ is $r^k(\epsilon + \frac{1 - \epsilon}{r}) \leq r^{k - 1}(1 + \epsilon r)$. 

\begin{lem}
After removing vertices in $V_q$, there is no path from $s_i$ to $t_i$ for any $i$.
\label{lem:dict_multicut}
\end{lem}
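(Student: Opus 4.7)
The plan is to argue by contradiction: suppose some $s_i$-$t_i$ path survives the removal of $V_q$, and extract two incompatible bounds on its length by tracking coordinate $q$ of $x$ and coordinate $i$ of $\alpha$ along the path. Fix $i$ and assume a directed path $s_i = u_0 \to u_1 \to \cdots \to u_L = t_i$ in $\calD^{\SM}_{r,k,R,\epsilon}$ that avoids $V_q$, with each internal vertex written as $u_t = v^{\alpha^{(t)}}_{x^{(t)}}$ for $1 \le t \le L-1$. By the definition of $V_q$, every surviving $x^{(t)}$ satisfies $x^{(t)}_q \in \{1, 2, \dots, r-1\}$; in particular, $x^{(t)}_q$ is neither $*$ nor $0$.

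Next, I would track coordinate $q$ across the $L-2$ internal edges. The edge rule permits $x^{(t+1)}_q = (x^{(t)}_q + 1) \bmod r$, or $x^{(t+1)}_q = *$, or $x^{(t)}_q = *$; the two starred options are ruled out by the previous observation, so $x^{(t+1)}_q = x^{(t)}_q + 1$, with the wrap $r-1 \mapsto 0$ forbidden since $0$ would again lie in $V_q$. Hence $x^{(1)}_q, \dots, x^{(L-1)}_q$ is a strictly increasing run of consecutive integers inside $\{1, \dots, r-1\}$, forcing $L - 2 \le r - 2$, i.e., $L \le r$.

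On the other hand, the boundary edges out of $s_i$ and into $t_i$ force $\alpha^{(1)}_i = 1$ and $\alpha^{(L-1)}_i = r$, while each internal edge shifts $\alpha_i$ by at most one in absolute value. Getting from $1$ to $r$ therefore requires at least $r-1$ internal edges, so $L - 2 \ge r - 1$, i.e., $L \ge r+1$. This contradicts $L \le r$ for every $r \ge 2$; the degenerate case $r = 1$ needs no work, since then $\Omega = \{0, *\}$ and $V_q$ already contains every non-terminal vertex.

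The argument is really just two-coordinate bookkeeping along the path; the only subtle point I anticipate is recognizing that the starred and wrap-around alternatives in the edge rule are precisely what $V_q$ is engineered to neutralize, so that once the two invariants are named the resulting length bounds clash by a margin of one. Everything else (checking boundary conditions at $s_i, t_i$ and separating internal from incident edges) is routine bookkeeping.
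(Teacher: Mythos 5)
Your proof is correct and takes essentially the same route as the paper: track the $q$-th coordinate of $x$ along the path to get an upper bound on the number of internal vertices, and use the $i$-th coordinate of $\alpha$ to get a lower bound, deriving a contradiction. The only difference is that you spell out the $\alpha$-coordinate bookkeeping that gives the lower bound (the paper states "any path from $s_i$ to $t_i$ must contain at least $r$ non-terminal vertices" without elaboration), and you note the degenerate $r=1$ case explicitly; both are harmless, correct additions.
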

\begin{proof}
Fix $i$ and let $p = (s_i, v^{\alpha_1}_{x^1}, \dots, v^{\alpha_z}_{x^z}, t_i)$ be a path from $s_i$ to $t_i$
where $\alpha_j \in [r]^k$ and $x^j \in \Omega^R$ for each $1 \leq j \leq z$. Let $y_j := (x^j)_q$ for each $1 \leq j \leq z$. 
The construction ensures that $y_{j + 1} = (y_j + 1) \mod r$, 
so after removing vertices in $V_q$, $z$ must be strictly less than $r$. Since any path from $s_i$ to $t_i$ must contain at least $r$ non-terminal vertices, there must be no path from $s_i$ to $t_i$.
\end{proof}

\paragraph{Soundness.}
To analyze soundness, we define a correlated probability space $(\Omega_1 \times \Omega_2, \nu)$ where both $\Omega_1, \Omega_2$ are copies of $\Omega = \{ 0, \dots, r - 1, * \}$. It is defined by the following process to sample $(x, y) \in \Omega^2$.
\begin{itemize}
\item Sample $x \in \{0, \dots, r - 1 \}$. Let $y = (x + 1) \mod r$. 
\item Change $x$ to $*$ with probability $\epsilon$. Do the same for $y$ independently. 
\end{itemize}
Note that the marginal distribution of both $x$ and $y$ is equal to $\mu$. Assuming $\epsilon < \frac{1}{2r}$, the minimum probability of any atom in $\Omega_1 \times \Omega_2$ is $\epsilon^2$. We use the following lemma to bound the correlation $\rho(\Omega_1, \Omega_2; \nu)$. 

\begin{lem}[Lemma 2.9 of~\cite{Mossel10}]
Let $(\Omega_1 \times \Omega_2, \mu)$ be two correlated spaces such that the probability of the smallest atom in $\Omega_1 \times \Omega_2$ is at least $\alpha > 0$. Define a bipartite graph $G = (\Omega_1 \cup \Omega_2, E)$ where $(a, b) \in \Omega_1 \times \Omega_2$ satisfies $(a, b) \in E$ if $\mu(a, b) > 0$. If $G$ is connected, then
$
\rho(\Omega_1, \Omega_2 ; \mu) \leq 1 - \frac{\alpha^2}{2}.
$
\label{lem:mossel_connected}
\end{lem}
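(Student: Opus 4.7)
The plan is to start from the variational identity
\[
\rho(\Omega_1,\Omega_2;\mu) \;=\; 1 - \tfrac{1}{2}\inf_{f,g}\E\bigl[(f(X)-g(Y))^2\bigr],
\]
where the infimum is taken over centered $f\in\mathcal{L}(\Omega_1), g\in\mathcal{L}(\Omega_2)$ with $\|f\|_2=\|g\|_2=1$. Proving the lemma reduces to showing that this infimum is at least $\alpha^2$. Since every nonzero atom has mass at least $\alpha$, the weighted Dirichlet form dominates an unweighted one on $G$:
\[
\E[(f(X)-g(Y))^2] \;\geq\; \alpha \sum_{(x,y)\in E(G)}(f(x)-g(y))^2.
\]
So the problem becomes a Poincar\'e-style estimate on the connected bipartite graph $G$: the edge-Dirichlet sum of any such $(f,g)$ is at least $\alpha$.

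I would execute this Poincar\'e estimate by moving to the reversible two-step Markov chain $T$ on $\Omega_1$ given by $(Tf)(x)=\E[\E[f(X')\mid Y]\mid X=x]$; the standard operator-theoretic description of the maximal correlation identifies $\rho^2=\lambda_2(T)$. The target then becomes $\lambda_2(T)\leq 1-\alpha^2$, which yields $\rho\leq\sqrt{1-\alpha^2}\leq 1-\alpha^2/2$ by an elementary inequality. A direct computation shows that $T(x,x)\geq \alpha^2$ for every $x$ in the support, because any neighbor $y$ of $x$ contributes $\mu(x,y)^2/(\mu_1(x)\mu_2(y))\geq \alpha^2$. This lets me write $T=\alpha^2 I+(1-\alpha^2)S$ for a stochastic matrix $S$, and connectivity of $G$ gives irreducibility of $S$, hence $\lambda_2(S)<1$ by Perron--Frobenius.

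The main obstacle is upgrading that qualitative $\lambda_2(S)<1$ to the precise quantitative bound the lemma requires, \emph{independent of the diameter of $G$}. A clean one-step minorization $T(x,x')\geq \alpha^2\mu_1(x')$ would finish instantly, but it already fails on short bipartite paths because $T(x,x')=0$ whenever $x,x'$ are more than two steps apart in $G$. My plan for circumventing this is to work instead with the symmetric single-step reversible walk on $\Omega_1\cup\Omega_2$, exploit the fact that every nontrivial cut is crossed by at least one atom of mass $\geq\alpha$, and then transfer the resulting spectral gap back to $T$ via the relation between the eigenvalues of a bipartite chain and those of its two-step restriction. I expect the bulk of the technical work to lie in absorbing the two powers of $\alpha$ (one from the atom bound, one from squaring singular values into $\lambda_2(T)$) while keeping the final constant equal to $\alpha^2/2$ rather than any diameter-dependent quantity.
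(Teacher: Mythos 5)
The paper does not prove this lemma; it is imported verbatim as Lemma~2.9 of Mossel~\cite{Mossel10}, so there is no in-paper argument to compare against. Judged on its own, your proposal correctly identifies the obstacle (a one-step Doeblin minorization of the two-step chain $T$ fails) and points at the right fix (a cut argument on the bipartite walk), but as written it is a plan rather than a proof, and your closing worry about whether the constant can be made $\alpha^2/2$ rather than diameter-dependent is a genuine hole.

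That hole closes if you drop the detour through $T$ and work directly with the single-step reversible walk $M$ on $\Omega_1\cup\Omega_2$ whose stationary measure puts mass $\mu_1(x)/2$ on $x\in\Omega_1$ and $\mu_2(y)/2$ on $y\in\Omega_2$. Because $M$ is bipartite, its spectrum is $\{\pm\sigma_i\}$, where $1=\sigma_1\geq\sigma_2\geq\cdots\geq 0$ are the singular values of the conditional-expectation operator; hence $\rho=\sigma_2=\lambda_2(M)$ and no squaring of singular values ever enters. For any $S\subseteq\Omega_1\cup\Omega_2$ whose stationary measure lies in $(0,\tfrac12]$, connectivity of $G$ forces some atom $(x,y)$ with exactly one endpoint in $S$, so the ergodic flow across the cut is at least $\mu(x,y)/2\geq\alpha/2$; dividing by the stationary measure of $S$, which is at most $\tfrac12$, gives conductance $\Phi(M)\geq\alpha$. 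The discrete Cheeger inequality $\lambda_2(M)\leq 1-\Phi(M)^2/2$ then yields $\rho\leq 1-\alpha^2/2$, with no dependence on the diameter of $G$. So the ``two powers of $\alpha$'' you anticipated come from the $\Phi^2$ inside Cheeger, not from squaring singular values, and your computation $T(x,x)\geq\alpha^2$ is correct but, as you yourself observed, a dead end. Until the identification $\rho=\lambda_2(M)$, the conductance lower bound, and the Cheeger step are actually written down, the proposal is a sketch with a real gap rather than a proof.
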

In our correlated space, the bipartite graph on $\Omega_1 \cup \Omega_2$ is connected since every $x \in \Omega_1$ is connected to $* \in \Omega_2$ and vice versa. Therefore, we can conclude that 
$\rho(\Omega_1, \Omega_2 ; \nu) \leq \rho := 1 - \frac{\epsilon^4}{2}$.

Apply Theorem~\ref{thm:mossel} ($\rho \leftarrow \rho, \alpha \leftarrow \epsilon^2, \epsilon \leftarrow \frac{\Gamma_{\rho}(\frac{\epsilon}{3}, \frac{\epsilon}{3})}{2} $) to get $\tau$ and $d$. We will later apply this theorem with the parameters obtained here. 
Fix an arbitrary subset $C \subseteq V$, and let $C_{\alpha}:= C \cap v^{\alpha}$.  
For $\alpha \in [r]^k$, call $v^{\alpha}$ {\em blocked} if $\mu^{\otimes R} (C_{\alpha}) \geq 1 - \epsilon$. The number of blocked $v^{\alpha}$'s is at most $\frac{\w(C)}{1 - \epsilon}$.

Consider the following graph $D = (V_D, E_D)$, which is the original integrality gap instance constructed by Saks et al.~\cite{SSZ04}. 
\begin{itemize}
\item $V_D = \{ s_i, t_i \}_{i \in [k]} \cup \{ v^{\alpha} \}_{\alpha \in [r]^k}$.

\item For any $i \in [k]$, there are edges from $s_i$ to $\{ v^{\alpha}: \alpha \in [r]^k, \alpha_i = 1 \}$, and edges from $\{ v^{\alpha}: \alpha \in [r]^k, \alpha_i = r \}$ to $t_i$. 

\item For $\alpha, \beta \in [r]^k$, we have an edge from $v^{\alpha}$ to $v^{\beta}$ if $\alpha \neq \beta$ and $1 \leq i \leq r$: $\alpha_i - \beta_i \in \{ -1, 0, +1 \}$. 
\end{itemize}
Saks et al.~\cite{SSZ04} proved the following theorem in their analysis of their integrality gap. 

\begin{thm}
Let $C'$ be a set of less than $k(r - 1)^{k - 1}$ vertices. There exists a path $(s_i, v^{\alpha_1}, \dots, v^{\alpha_z}, t_i)$ for some $i$ that does not intersect $C'$. 
\label{thm:ssz}
\end{thm}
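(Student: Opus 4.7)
The plan is to follow the integrality-gap analysis of Saks, Samorodnitsky, and Zosin~\cite{SSZ04} and argue by contradiction: assume $C'$ is a multicut (no $s_i$-$t_i$ path survives for any $i$) of size $< k(r-1)^{k-1}$, and deduce via a multicommodity-flow / counting argument that $|C'|$ must actually be at least $k(r-1)^{k-1}$.

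First I would set up a canonical family of $s_i$-$t_i$ paths for each direction. For each $i \in [k]$ and each $\sigma \in [r-1]^{k-1}$, let $P_{i,\sigma}$ be the straight path that increments coordinate $i$ from $1$ to $r$ while keeping the remaining coordinates fixed to the values specified by $\sigma$. Every edge in such a path exists because consecutive vertices differ by exactly $1$ in coordinate $i$ and by $0$ elsewhere. Within a single direction $i$, the $(r-1)^{k-1}$ paths $\{P_{i,\sigma}\}_\sigma$ are pairwise vertex-disjoint, so any multicut must use at least $(r-1)^{k-1}$ vertices just to block direction $i$.

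Second, the real difficulty is that a single vertex $v^\beta$ with $\beta \in [r-1]^k$ lies on one canonical path of every direction simultaneously, so the straightforward disjoint-paths bound only gives $|C'| \geq (r-1)^{k-1}$. To close the gap to $k(r-1)^{k-1}$, I would turn to a fractional multicommodity flow: for each $i$, route one unit of flow along each $P_{i,\sigma}$, and argue that the total vertex congestion across all directions combined can be bounded by $1$ after appropriately averaging the path families (for instance, by cyclically shifting the fixed non-$i$ coordinates or by replacing some straight segments with short detours that use ``boundary'' vertices with at most one coordinate equal to $r$, which carry less load). By weak LP duality for multicommodity flow versus multicut, any integer multicut must have size at least the total flow value, giving $|C'| \geq k(r-1)^{k-1}$ and yielding the claim by contrapositive.

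The hardest step will be engineering the path families so that the aggregate vertex congestion is exactly $1$ rather than $k$. A completely naive bundle of straight paths saturates interior vertices at load $k$; the required improvement comes from exploiting the fact that $s_i$-$t_i$ paths can bend (any two vertices with coordinates differing by at most $1$ in every entry are adjacent), so one has the freedom to detour flow away from ``hot'' interior vertices and onto the boundary slab $\{v^\beta : \beta_j = r \text{ for exactly one } j\}$, which contributes to only one direction. Verifying that such a rearrangement yields a feasible flow of total value $k(r-1)^{k-1}$ with unit congestion is the technical heart of the argument, and is essentially the original SSZ computation adapted to the Directed Multicut LP.
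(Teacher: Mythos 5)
Your proposal cannot work, and the reason is precisely the point of the Saks--Samorodnitsky--Zosin construction: the instance $D$ is \emph{designed} so that the multicommodity flow (equivalently, by LP duality, the fractional multicut) is a factor $\approx k$ smaller than the integral multicut. Assigning fractional weight $\tfrac{1}{r}$ to every interior vertex $v^\alpha$ is LP-feasible, since every $s_i$-$t_i$ path must move coordinate $i$ from $1$ to $r$ in steps of at most one and therefore visits at least $r$ interior vertices. Thus the fractional multicut optimum is at most $r^k \cdot \tfrac{1}{r} = r^{k-1}$, and by weak duality \emph{every} feasible vertex-capacitated multicommodity flow has total value at most $r^{k-1}$. (Even more elementarily: a flow of value $F$ with unit congestion consumes at least $rF$ units of vertex capacity, and there are only $r^k$ interior vertices, so $F \leq r^{k-1}$; rerouting onto ``boundary slabs'' can only lengthen paths and hence only tightens this bound.) Since $k(r-1)^{k-1} > r^{k-1}$ for all $k \geq 2$ and large $r$, the flow of value $k(r-1)^{k-1}$ that your argument hinges on simply does not exist, and ``weak LP duality for multicommodity flow versus multicut'' can never certify a lower bound larger than $r^{k-1}$ here.

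So the gap in your proposal is not a missing technical computation but a wrong framework: the $k(r-1)^{k-1}$ lower bound on integral multicuts is exactly the \emph{extra-LP} content of~\cite{SSZ04}, proved there by a combinatorial analysis of the reachability sets from each $s_i$ in the punctured grid (not by a flow computation). The paper under review does not reprove this theorem at all; it imports it verbatim from~\cite{SSZ04}. To salvage your write-up, you would have to abandon the flow/duality angle entirely and reconstruct that combinatorial argument from the source.
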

Setting $C' := \{ v^{\alpha} \in V_D : v^{\alpha} \mbox{ is not blocked.} \}$, and applying Theorem~\ref{thm:ssz} concludes that unless $\w(C) \geq (1 - \epsilon) \cdot k \cdot (r - 1)^{k - 1}$, there exists a path $(s_i, v^{\alpha_1}, \dots, v^{\alpha_z}, t_i)$ where each $v^{\alpha_i}$ is unblocked for some $i \in [k]$. 

For $1 \leq j \leq z$, let $S_j \subseteq v^{\alpha_j}$ be such that $x \in S_j$ if there exists a path $(s, v^{\alpha_1}_{x^1}, \dots,
v^{\alpha_{j - 1}}_{x^{j - 1}}, v^{\alpha_j}_{x})$ for some $x^1, \dots, x^{j - 1}$. 
For $1 \leq j \leq z$, let $f_j : \Omega^R \mapsto \{ 0, 1 \}$ be the indicator function of $S_j$. 
We prove that if none of $f_j$ reveals any {\em influential coordinate}, $\mu^{\otimes R} (S_z) > 0$, which shows that there exists a $s_i$-$t_i$ path even after removing vertices in $C$.

\begin{lem}
Suppose that for any $1 \leq j \leq z$ and $1 \leq i \leq R$, $\Inf_i^{\leq d}[f_j] \leq \tau$. Then $\mu^{\otimes R}(S_z) > 0$. 
\end{lem}
\begin{proof}
We prove by induction that $\mu^{\otimes R} (S_j) \geq \frac{\epsilon}{3}$. It holds when $j = 1$ since $v^{\alpha_1}$ is unblocked. Assuming $\mu^{\otimes R} (S_j) \geq \frac{\epsilon}{3}$, since $S_j$ does not reveal any influential coordinate, Theorem~\ref{thm:mossel} shows that for any subset $T_{j + 1} \subseteq v^{\alpha_{j + 1}}$ with $\mu^{\otimes R}(T_{j + 1}) \geq \frac{\epsilon}{3}$, there exists an edge from $S_j$ and $T_{j + 1}$. If $S'_{j + 1} \subseteq v^{\alpha_{j + 1}}$ is the set of out-neighbors of $S_j$, we have $\mu^{\otimes R} (S'_{j + 1}) \geq 1 - \frac{\epsilon}{3}$. Since $v^{\alpha_{j + 1}}$ is unblocked, $\mu^{\otimes R} (S'_{j + 1} \setminus C) \geq \frac{2\epsilon}{3}$, completing the induction. 
\end{proof}

In summary, in the completeness case, if we cut vertices of total weight $r^{k - 1}(1 + \epsilon r)$, we cut every $s_i$-$t_i$ pair. 
In the soundness case, unless we cut vertices of total weight at least $(1 - \epsilon) \cdot k \cdot (r - 1)^{k - 1}$, we cannot cut every $s_i$-$t_i$ pair. The gap is 
$\frac{k (1 - \epsilon)  (r - 1)^{k - 1}}{(1 + \epsilon r) r^{k - 1}}$. For a fixed $k$, increasing $r$ and decreasing $\epsilon$ faster makes the gap arbitrarily close to $k$.

\section{Short Path Edge Cut}
\label{sec:test_edge}
We propose our dictatorship test for \edgecut that will be used for proving Unique Games hardness. 
It is parameterized by positive integers $a, b, r, R$. 
It is inspired by the integrality gap instances by Baier et al.~\cite{BEHKKPSS10} Mahjoub and and McCormick~\cite{MM10}, and made such that the edge cuts that correspond to {\em dictators} behave the same as the fractional solution that cuts $\frac{1}{r}$ fraction of every edge. All graphs in this section are undirected.

For positive integers $a, b, r, R$, we construct $\calD^{\SE}_{a, b, r, R} = (V, E)$. Let $\Omega = \{ 0, \dots, r-1 \}$, and $\mu : \Omega \mapsto [0, 1]$ with $\mu(x) = \frac{1}{r}$ for each $x \in \Omega$. 
We also define a correlated probability space $(\Omega_1 \times \Omega_2, \nu)$ where both $\Omega_1, \Omega_2$ are copies of $\Omega$. It is defined by the following process to sample $(x, y) \in \Omega^2$.
\begin{itemize}
\item Sample $x \in \{0, \dots, r - 1 \}$. Let $y = (x + 1) \mod r$. 
\item With probability $1 - \frac{1}{r}$, output $(x, y)$. Otherwise, resample $x, y \in \Omega$ independently and output $(x, y)$. 
\end{itemize}
Note that the marginal distribution of both $x$ and $y$ is equal to $\mu$. Given $x = (x_1, \dots, x_R) \in \Omega^R$ and $y = (y_1, \dots, y_R) \in \Omega^R$, let $\nu^{\otimes R}(x, y) = \prod_{i = 1}^R \nu(x_i, y_i)$. 
We define $\calD^{\SE}_{a, b, r, R} = (V, E)$ as follows. 

\begin{itemize}
\item $V = \{ s, t \} \cup \{ v^i_x \}_{0 \leq i \leq b, x \in \Omega^R}$. Let $v^i$ denote the set of vertices $\{ v^i_x \}_{x \in \Omega^R}$. 

\item For any $x \in \Omega^R$, there is an edge from $s$ to $v^0_x$ and an edge from $v^b_x$ to $t$, both with weight $\infty$ and length $1$. 

\item For $0 \leq i < b, x \in \Omega^R$, there is an edge $(v^i_x, v^{i+1}_x)$ of length $a$ and weight $\infty$. Call it a {\em long edge}. 

\item For any $0 \leq i < b$ $x, y \in \Omega^R$, there is an edge $(v^i_x, v^{i+1}_y)$ of length $1$ and weight $\nu^{\otimes R}(x, y)$. Note that $\nu^{\otimes R}(x, y) > 0$ for any $x, y \in \Omega^R$. Call it a {\em short edge}. 
The sum of finite weights is $b$. 
\end{itemize}

\paragraph{Completeness.} 
We first prove that edge cuts that correspond to {\em dictators} behave the same as the fractional solution that gives $\frac{1}{r}$ to every edge. Fix $q \in [R]$ and let $E_q$ be the set of short edges defined by
\[
E_q := \{ (v^i_x, v^{i + 1}_y) : 0 \leq i < b, y_q \neq x_q + 1 \mod R \mbox{ or } (x_q, y_q) = (0, 1) \}. 
\]
When $(x, y) \in \Omega_1 \times \Omega_2$ is sampled according to $\nu$, the probability that 
$y_q \neq x_q + 1 \mod R \mbox{ or } (x_q, y_q) = (0, 1)$ is at most $\frac{2}{r}$. 
The total weight of $E_q$ is $\frac{2b}{r}$. 

\begin{lem}
After removing edges in $E_q$, the length of the shortest path is at least $a(b - r + 1)$.
\label{lem:dict_edge}
\end{lem}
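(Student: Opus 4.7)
The plan is to show that for any $s$-$t$ path $P$ in the graph with $E_q$ removed, $P$ must use at least $b - r + 1$ long edges, each of length $a$, from which the lemma follows at once. The twist, compared to a purely monotone/forward analysis, is that the graph is undirected so $P$ may cross layers in both directions; the cleanest way to handle this is with a signed count.

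First I would set up notation. Since $s$ is adjacent only to vertices in $v^0$ and $t$ only to vertices in $v^b$, the path $P$ enters the layered region at some $v^0_{x^0}$ and leaves at some $v^b_{x^b}$. Each $v$-edge along $P$ is classified by its direction (forward $=$ lower-to-higher layer, backward $=$ higher-to-lower) and by its type (long $=$ same $x$, length $a$; short $=$ distinct $x$'s, length $1$); let $F_\ell, B_\ell, F_s, B_s$ denote the four counts. The net layer advance from layer $0$ to layer $b$ gives
\[
(F_\ell + F_s) - (B_\ell + B_s) \;=\; b.
\]

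Next I track the $q$-th coordinate along $P$. A surviving short edge $(v^i_x, v^{i+1}_y)$ satisfies $y_q \equiv x_q + 1 \pmod{r}$ and $(x_q, y_q) \neq (0, 1)$, so the only forbidden endpoint pair (in either direction of traversal) has lower $q$-coordinate $0$ and upper $q$-coordinate $1$. Hence the $q$-coordinate evolves as a walk on the \emph{linear} graph $P_q := 1 - 2 - \cdots - (r-1) - 0$ on vertex set $\{0, 1, \dots, r-1\}$ (the cycle $\ZZ_r$ with the single edge between $0$ and $1$ removed): a forward short edge is a unit right step on $P_q$, a backward short edge is a unit left step, and a long edge stays in place. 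Since $P_q$ has diameter $r - 1$, the net displacement $F_s - B_s = \mathrm{pos}(x^b_q) - \mathrm{pos}(x^0_q)$ satisfies $|F_s - B_s| \leq r - 1$. Combining with the layer identity,
\[
F_\ell - B_\ell \;=\; b - (F_s - B_s) \;\geq\; b - (r - 1) \;=\; b - r + 1,
\]
and since $B_\ell \geq 0$ the total number of long edges is $F_\ell + B_\ell \geq F_\ell - B_\ell \geq b - r + 1$. Each long edge contributes $a$ to the length of $P$, so the length of $P$ is at least $a(b - r + 1)$, as required.

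The one observation that needs care is that forbidding just the single pair $(0, 1)$ breaks the cyclic symmetry of $\ZZ_r$ into a length-$(r-1)$ path graph $P_q$, on which net displacements are bounded by $r - 1$ rather than being trivially $0$ as on a cycle; this is the entire reason the argument goes through. Once this is made explicit, the signed-count identity subsumes forward and backward moves in a single sweep, so no separate monotone-path analysis (nor any potential-function construction that would become messy when $a$ is large) is needed.
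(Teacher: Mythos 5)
Your proof is correct and takes essentially the same route as the paper's: both classify $v$-edges into forward/backward long/short, use the net-layer identity, and bound the net short-edge displacement by $r-1$ using the removed $(0,1)$ transition, then read off a lower bound on the number of long edges. The only cosmetic difference is that you make the ``walk on a length-$(r-1)$ path graph'' picture explicit (the paper states $z_{\SSF} - z_{\SSB} \le r-1$ directly) and bound all long edges $F_\ell + B_\ell$ rather than just the forward ones $z_{\SLF}$, which gives the same constant.
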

\begin{proof}
Let $p = (s, v^{i_1}_{x^1}, \dots, v^{i_z}_{x^z}, t)$ be a path from $s$ to $t$ where $i_j \in \{ 0, \dots, b \}$ and $x^j \in \Omega^R$ for each $1 \leq j \leq z$. Let $y_j := (x^j)_q \in \{ 0, \dots, r - 1 \}$ for each $1 \leq j \leq z$. 

For each $1 \leq j < z$, the edge $(p_j, p_{j + 1})$ is either a long edge or a short edge, and either taken forward (i.e., $i_j < i_{j + 1}$) or backward (i.e., $i_j > i_{j + 1}$). Let $z_{\SLF}, z_{\SSF}, z_{\SLB}, z_{\SSB}$ be the number of long edges taken forward, short edges taken forward, long edges taken  backward, and shot edges taken backward, respectively ($z_{\SLF} + z_{\SSF} + z_{\SLB} + z_{\SSB} = z - 1$). By considering how $i_j$ changes, 
\begin{equation}
z_{\SLF} + z_{\SSF} - z_{\SLB} - z_{\SSB} = b.
\label{eq:edge1}
\end{equation}
Consider how $y_j$ changes. Taking a long edge does not change $y_j$. Taking a short edge forward increases $y_j$ by $1$ mod $r$, taking a short edge backward decreases $y_j$ by $1$ mod $r$. 
Since $E_q$ is cut, $y_j$ can never change from $0$ to $1$. This implies 
\begin{equation}
z_{\SSF} - z_{\SSB} \leq r - 1.
\label{eq:edge2}
\end{equation}
$\eqref{eq:edge1} - \eqref{eq:edge2}$ yields $z_{\SLF} - z_{\SLB} \geq b - r + 1$.
The total length of $p$ is at least $a \cdot z_{\SLF} \geq a(b - r + 1)$.
\end{proof}

\paragraph{Soundness.}
We first bound the correlation $\rho(\Omega_1, \Omega_2 ; \nu)$. 
The following lemma of Wenner~\cite{Wenner13} gives a convenient way to bound the correlation.

\begin{lem}[Corollary 2.18 of \cite{Wenner13}]
\label{lem:correlation_bound}
Let $(\Omega_1 \times \Omega_2, \delta \mu + (1 - \delta)\mu')$ be two correlated spaces such that the marginal distribution of at least one of $\Omega_1$ and $\Omega_2$ is identical on $\mu$ and $\mu'$. Then,
\[
\rho(\Omega_1, \Omega_2; \delta \mu + (1 - \delta)\mu') \leq \sqrt{\delta \cdot \rho(\Omega_1, \Omega_2; \mu)^2 + (1 - \delta) \cdot \rho(\Omega_1, \Omega_2; \mu')^2}.
\]
\end{lem}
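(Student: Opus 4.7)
The plan is to reformulate the maximal correlation $\rho(\Omega_1,\Omega_2;\mu)$ as an operator norm, then exploit a clean pointwise identity between the conditional expectation operators of $\mu$, $\mu'$, and the mixture $\nu = \delta\mu + (1-\delta)\mu'$. The hypothesis that one marginal (say on $\Omega_1$, call it $\pi$) is preserved by both $\mu$ and $\mu'$ is crucial, because it means all three operators act from the same domain $L^2(\pi)$.

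The first step is to identify $\rho(\Omega_1,\Omega_2;\mu)$ with the operator norm of the conditional expectation operator $T_\mu : L^2(\Omega_1,\pi) \to L^2(\Omega_2,\pi_2^\mu)$ defined by $(T_\mu f)(y) = \E_\mu[f(X)\mid Y=y]$, where $\pi_2^\mu$ is the marginal of $\mu$ on $\Omega_2$. The adjoint identity $\E_\mu[f(X)g(Y)] = \langle T_\mu f,\,g\rangle_{\pi_2^\mu}$, combined with the tower-law observation that $T_\mu$ sends $L^2_0(\pi)$ into $L^2_0(\pi_2^\mu)$, applied through Cauchy--Schwarz in $g$, yields
\[
\rho(\Omega_1,\Omega_2;\mu) \;=\; \|T_\mu\|_{L^2_0(\pi)\to L^2(\pi_2^\mu)},
\]
with the analogous identification for $\mu'$ and $\nu$. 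So the task reduces to bounding $\|T_\nu\|$.

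The second step is the key pointwise identity. Writing $(T_\nu f)(y) = \bigl(\sum_x f(x)\,\nu(x,y)\bigr)/\pi_2^\nu(y)$, substituting $\nu = \delta\mu + (1-\delta)\mu'$ in the numerator and $\pi_2^\nu = \delta\pi_2^\mu + (1-\delta)\pi_2^{\mu'}$ in the denominator, one obtains
\[
\pi_2^\nu(y)\,(T_\nu f)(y) \;=\; \delta\,\pi_2^\mu(y)\,(T_\mu f)(y) \;+\; (1-\delta)\,\pi_2^{\mu'}(y)\,(T_{\mu'} f)(y).
\]
This displays $(T_\nu f)(y)$ as a convex combination of $(T_\mu f)(y)$ and $(T_{\mu'}f)(y)$ with weights $\delta\pi_2^\mu(y)/\pi_2^\nu(y)$ and $(1-\delta)\pi_2^{\mu'}(y)/\pi_2^\nu(y)$. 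Applying Jensen's inequality to $x\mapsto x^2$ pointwise, multiplying through by $\pi_2^\nu(y)$, and summing over $y$ yields
\[
\|T_\nu f\|_{\pi_2^\nu}^2 \;\leq\; \delta\,\|T_\mu f\|_{\pi_2^\mu}^2 \;+\; (1-\delta)\,\|T_{\mu'} f\|_{\pi_2^{\mu'}}^2.
\]
Restricting to $f \in L^2_0(\pi)$ with $\|f\|_\pi = 1$ and invoking the first step bounds the right-hand side by $\delta\rho(\mu)^2 + (1-\delta)\rho(\mu')^2$; taking the supremum over such $f$ gives $\rho(\nu)^2 \leq \delta\rho(\mu)^2 + (1-\delta)\rho(\mu')^2$, which is the claimed bound after taking square roots.

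The main subtlety is conceptual: the three operators $T_\mu, T_{\mu'}, T_\nu$ map into three different Hilbert spaces weighted respectively by $\pi_2^\mu, \pi_2^{\mu'}, \pi_2^\nu$, so naive triangle-inequality attempts such as $\|T_\nu\| \leq \delta\|T_\mu\| + (1-\delta)\|T_{\mu'}\|$ are not even well-posed, and the related covariance-level attempt $|\Cov_\nu(f,g)| \leq \delta\rho(\mu)\sqrt{\Var_\mu f\,\Var_\mu g} + (1-\delta)\rho(\mu')\sqrt{\Var_{\mu'}f\,\Var_{\mu'}g}$ combined with Cauchy--Schwarz only yields a strictly weaker bound because $\delta\Var_\mu f\,\Var_\mu g + (1-\delta)\Var_{\mu'}f\,\Var_{\mu'}g$ can exceed $\Var_\nu f\,\Var_\nu g$. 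The un-normalized identity above circumvents this: after multiplying by $\pi_2^\nu(y)$, the mixture weight on the left matches the weight of the target norm exactly, while the component weights $\delta\pi_2^\mu(y)$ and $(1-\delta)\pi_2^{\mu'}(y)$ produce exactly the correct weights on the right, so the pointwise Jensen step converts into a clean bound on squared norms (and hence $\rho^2$) rather than on $\rho$ itself.
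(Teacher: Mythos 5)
The paper cites this statement as Corollary~2.18 of Wenner~\cite{Wenner13} and does not reproduce a proof, so there is no in-paper argument to compare against; I can only assess your attempt on its own merits.

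Your proof is correct and self-contained. The identification $\rho(\Omega_1,\Omega_2;\mu)=\|T_\mu\|_{L^2_0(\pi)\to L^2(\pi_2^\mu)}$ is accurate (the covariance is unchanged by centering $f$, so one may restrict to mean-zero $f$ without loss; $T_\mu$ then maps into mean-zero functions, and the $\sup$ over $g$ is handled by Cauchy--Schwarz). The hypothesis that the $\Omega_1$-marginal $\pi$ is shared by $\mu$ and $\mu'$ (hence also by $\nu$) is exactly what guarantees that $T_\mu$, $T_{\mu'}$, $T_\nu$ all act on the same domain $L^2_0(\pi)$, so the pointwise identity
\[
\pi_2^\nu(y)\,(T_\nu f)(y)=\delta\,\pi_2^\mu(y)\,(T_\mu f)(y)+(1-\delta)\,\pi_2^{\mu'}(y)\,(T_{\mu'}f)(y)
\]
is well-posed and, after dividing by $\pi_2^\nu(y)$, exhibits $(T_\nu f)(y)$ as a genuine convex combination with $y$-dependent weights. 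Applying Jensen pointwise, re-weighting by $\pi_2^\nu(y)$, and summing over $y$ makes each of the three resulting sums land in the correct target space (weighted by $\pi_2^\nu$, $\pi_2^\mu$, $\pi_2^{\mu'}$ respectively), which is the crucial alignment; taking the supremum over unit $f\in L^2_0(\pi)$ then gives $\rho(\nu)^2\le\delta\,\rho(\mu)^2+(1-\delta)\,\rho(\mu')^2$. Your closing remark on why a naive covariance-level triangle inequality is ill-posed (different normalizing variances) is also right and correctly explains why the interpolation is at the level of $\rho^2$, not $\rho$. This is essentially the same conditional-expectation-operator argument as in Wenner's original proof.
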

When $(x, y)$ is sampled from $\nu$, they are completely independent with probability $\frac{1}{r}$. Therefore, we have $\rho := \rho(\Omega_1, \Omega_2 ; \nu) \leq \sqrt{1 - \frac{1}{r}}$. 
By Sheppard's Formula, 
\[
\Gamma_{\rho}(\frac{1}{2}, \frac{1}{2})  = 
\frac{1}{4} + \frac{1}{2 \pi} \arcsin (-\rho) \geq 
\frac{1}{4} - \frac{1}{2 \pi} \arccos (\frac{1}{\sqrt{r}}) = 
\sum_{n = 0}^{\infty} 
\frac{(2n)!}{4^n (n!)^2 (2n + 1)} (\frac{1}{\sqrt{r}})^{2n + 1} \geq \frac{1}{\sqrt{r}}.
\]

Apply Theorem~\ref{thm:mossel} ($\rho \leftarrow \rho, \alpha \leftarrow \frac{1}{r^3}, \epsilon \leftarrow \frac{\Gamma_{\rho}(\frac{1}{2}, \frac{1}{2})}{3} $) to get $\tau$ and $d$. We will later apply this theorem with the parameters obtained here. 

Fix an arbitrary subset $C \subseteq E$ of short edges.
For $0 \leq i < b$, let $C_i = C \cap (v^i \times v^{i+1})$. Call a pair $(i, i + 1)$ as the $i$th layer, and say it is blocked when $\nu^{\otimes R}(C_i) \geq \frac{\Gamma_{\rho}(\frac{1}{2}, \frac{1}{2})}{2}$. 
Let $b'$ be the number of blocked layers. 
For $0 \leq i \leq b$, let $S_i \subseteq v^{i}$ be such that $x \in S_i$ if there exists a path $(s, p_0, \dots, p_i = v^i_x)$ such that
\begin{itemize}
\item For $0 \leq i' \leq i$, $p_{i'} \in v^{i'}$. 
\item For $0 \leq i' < i$, $(p_{i'}, p_{i' + 1})$ is short if and only if the $i'$th layer is unblocked. 
\end{itemize}

Let $f_i : \Omega^R \mapsto [ 0, 1 ]$ be the indicator function of $S_i$. 
We prove that if none of $f_i$ reveals any {\em influential coordinate}, $S_b$ is nonempty, implying that there exists a path using $b'$ long edges and $b - b'$ short edges. 
. Therefore, even after removing edges in $C$, the length of the shortest path is at most $2 + ab' + (b - b')$. 

\begin{lem}
Suppose that for any $0 \leq i \leq b$ and $1 \leq j \leq R$, $\Inf_j^{\leq d}[f_i] \leq \tau$. Then $S_b \neq \emptyset$.
\end{lem}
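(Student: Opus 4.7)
The plan is to show by induction on $i$ that $\mu^{\otimes R}(S_i) \geq \tfrac{1}{2}$ for every $0 \le i \le b$, which in particular forces $S_b \neq \emptyset$. The base case is immediate since $S_0 = v^0$ has measure $1$. For the inductive step, I would split into two cases according to whether layer $i$ is blocked or unblocked.

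If layer $i$ is blocked, then by the definition of $S_{i+1}$ we must traverse a long edge from layer $i$ to layer $i+1$. Since long edges have weight $\infty$ and are not in $C$, each $y \in S_i$ yields $y \in S_{i+1}$ (via the long edge $(v^i_y, v^{i+1}_y)$), so $S_{i+1} \supseteq S_i$ as subsets of $\Omega^R$ and the measure is preserved.

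If layer $i$ is unblocked, then $S_{i+1}$ is the set of $x$ such that some $y \in S_i$ is connected to $x$ by a short edge not in $C_i$. I would argue by contradiction: suppose $\mu^{\otimes R}(S_{i+1}) < \tfrac{1}{2}$ and let $g := 1 - f_{i+1}$, so $\E_\mu[g] > \tfrac{1}{2}$; note $g$ shares its nonconstant Efron--Stein components with $f_{i+1}$, hence $\Inf_j^{\le d}[g] \le \tau$ for all $j$. Whenever $f_i(y) g(x) = 1$, the short edge $(v^i_y, v^{i+1}_x)$ must lie in $C_i$ (otherwise $x$ would be in $S_{i+1}$), so
\[
\E_{(y,x) \sim \nu^{\otimes R}}[f_i(y) g(x)] \le \nu^{\otimes R}(C_i) < \frac{\Gamma_\rho(\tfrac{1}{2}, \tfrac{1}{2})}{2}.
\]
On the other hand, Theorem~\ref{thm:mossel} applied to $f_i$ and $g$ (with the parameters $\tau, d$ fixed earlier and $\alpha = 1/r^3$ bounding the smallest atom of $\nu$) yields
\[
\E_{(y,x) \sim \nu^{\otimes R}}[f_i(y) g(x)] \ge \Gamma_\rho\bigl(\E[f_i], \E[g]\bigr) - \frac{\Gamma_\rho(\tfrac{1}{2}, \tfrac{1}{2})}{3}.
\]
Since $\E[f_i], \E[g] \ge \tfrac{1}{2}$ and $\Gamma_\rho$ is monotone in each argument, $\Gamma_\rho(\E[f_i], \E[g]) \ge \Gamma_\rho(\tfrac{1}{2}, \tfrac{1}{2})$, giving $\E[f_i g] \ge \tfrac{2}{3} \Gamma_\rho(\tfrac{1}{2}, \tfrac{1}{2})$, which contradicts the upper bound above.

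The main subtlety is verifying the hypotheses of Theorem~\ref{thm:mossel}: I need to check that the smallest atom of $(\Omega_1 \times \Omega_2, \nu)$ is indeed at least $1/r^3$ (which follows from the resampling branch), and that $g = 1 - f_{i+1}$ has the same low-degree influences as $f_{i+1}$. The choice of the ``blocked'' threshold $\Gamma_\rho(\tfrac{1}{2}, \tfrac{1}{2})/2$ and the slack $\Gamma_\rho(\tfrac{1}{2}, \tfrac{1}{2})/3$ in Theorem~\ref{thm:mossel} were precisely calibrated to close the gap in this last inequality.
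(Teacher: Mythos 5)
Your proof is correct and uses the same key ingredients as the paper's: the paper also reduces to finding a layer where the mass of $S_i$ drops below $\tfrac12$, observes that the short edges from $S_i$ to $v^{i+1}\setminus S_{i+1}$ all lie in $C_i$, and applies Theorem~\ref{thm:mossel} with the pre-fixed $\tau, d$ to derive $\nu^{\otimes R}(C_i) > \tfrac23\Gamma_\rho(\tfrac12,\tfrac12)$, contradicting the unblocked threshold $\tfrac12\Gamma_\rho(\tfrac12,\tfrac12)$. The only difference is packaging: you run a forward induction maintaining $\mu^{\otimes R}(S_i)\ge\tfrac12$, while the paper argues by contradiction from $S_b=\emptyset$ and locates the threshold-crossing layer; the technical content (including the calibration of the $\alpha=1/r^3$ atom bound, the $\Gamma_\rho(\tfrac12,\tfrac12)/3$ slack, and the fact that complementing $f_{i+1}$ preserves low-degree influences) is identical.
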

\begin{proof}
Assume towards contradiction that $S_b = \emptyset$. 
Since $S_0 = \Omega^R$ and $S_{i} = S_{i + 1}$ if the $i$th layer is blocked (and we use long edges), there must exist $i$ such that the $i$th layer is unblocked and $\mu^{\otimes R}(S_i) \geq \frac{1}{2}, \mu^{\otimes R}(S_{i + 1}) < \frac{1}{2}$. All short edges between $S_{i}$ and $v^{i + 1} \setminus S_{i + 1}$ are in $C_i$.  Theorem~\ref{thm:mossel} implies that $\nu^{\otimes R}(C_i) > \frac{2}{3} \Gamma_{\rho}(\frac{1}{2}, \frac{1}{2})$. This contradicts the fact that the $i$th layer is unblocked. 
\end{proof}

In summary, in the completeness case, if we cut edges of total weight $k := k(a, b, r) = \frac{2b}{r}$, the length of the shortest path is at least $l := l(a, b, r) = a(b - r + 1)$. 
In the soundness case, even after cutting edges of total weight $k'$, at most $\frac{2k'}{\Gamma_{\rho}(\frac{1}{2}, \frac{1}{2})} \leq 2k'\sqrt{r}$ layers are blocked, 
the length of the shortest path is at most $l' = 2 + (b - 2k' \sqrt{r}) + 2 a k' \sqrt{r}$. 

\begin{itemize}
\item Let $a = 4, b = 2r - 1$ so that $k \leq 4, l = 4r$. Requiring $l' \geq l$ results in $k' = \Omega(\sqrt{r})$, giving a gap of $\Omega(\sqrt{r}) = \Omega(\sqrt{l})$  between the completeness case and the soundness case for Length-Bounded Edge Cut.

\item Let $a = \sqrt{r}, b = 2r - 1$ so that $k \leq 4, l = r^{1.5}$. Requiring $k' \leq 4$ results in $l' = O(r)$, giving a gap of $\Omega(\sqrt{r}) = \Omega(l^{1/3})$ for Shortest Path Interdiction. Generally, $k' \leq O(r^{\epsilon})$ results in $l' \leq O(r^{1 + \epsilon})$, giving an $(O(r^{\epsilon}), O(r^{1/2 - \epsilon}))$-bicriteria gap for any $\epsilon \in (0, \frac{1}{2})$. 
\end{itemize}

\section{Short Path Vertex Cut}
\label{sec:test_vertex}
We propose our dictatorship test for \vertexcut that will be used for proving Unique Games hardness. 
It is parameterized by positive integers $a, b, r, R$ and small $\epsilon > 0$. 
It is inspired by the integrality gap instances by Baier et al.~\cite{BEHKKPSS10} Mahjoub and and McCormick~\cite{MM10}, and made such that the vertex cuts that correspond to {\em dictators} behave the same as the fractional solution that cuts $\frac{1}{r}$ fraction of every vertex. 
All graphs in this section are undirected. 

For positive integers $a, b, r, R$, and $\epsilon > 0$, define $\calD^{\SV}_{a, b, r, R, \epsilon} = (V, E)$ be the graph defined as follows. 
Consider the probability space $(\Omega, \mu)$ where $\Omega := \{ 0, \dots, r-1, * \}$, and $\mu : \Omega \mapsto [0, 1]$ with $\mu(*) = \epsilon$ and $\mu(x) = \frac{1 - \epsilon}{r}$ for $x \neq *$. 

\begin{itemize}
\item $V = \{ s, t \} \cup \{ v^i_x \}_{0 \leq i \leq b, x \in \Omega^R}$. Let $v^i$ denote the set of vertices $\{ v^i_x \}_x$. 

\item For $0 \leq i \leq b$ and $x \in \Omega^R$, $\w(v^i_x) = \mu^{\otimes R}(x)$. Note that the sum of weights is $b + 1$. 

\item For any $0 \leq i \leq b$, there are edges from $s$ to each vertex in $v_i$ with length $ai + 1$ and edges from each vertex in $v_i$ to $t$ with length $(b - i)a + 1$.

\item For $x, y \in \Omega^R$, we call that $x$ and $y$ are {\em compatible} if 
\begin{itemize}
\item For any $1 \leq j \leq R$: [$y_j = (x_j + 1)\mod r$] or [$y_j = *$] or [$x_j = *$]. 
\end{itemize}

\item For any $0 \leq i < b$ and compatible $x, y \in \Omega^R$, we have an edge $(v^i_x, v^{i+1}_y)$ of length $1$ (called a {\em short edge}). 

\item For any $i, j$ such that $0 \leq i < j - 1 < b$ and compatible $x, y \in \Omega^R$, we have an edge $(v^i_x, v^{j}_y)$ of length $(j - i)a$ (called a {\em long edge}).

\end{itemize}

\paragraph{Completeness.} 
We first prove that vertex cuts that correspond to {\em dictators} behave the same as the fractional solution that gives $\frac{1}{r}$ to every vertex. For any $q \in [R]$, let $V_q := \{ v^i_x : 0 \leq i \leq b, x_q = * \mbox{ or } 0 \}$. Note that the total weight of $V_q$ is $(b + 1)(\epsilon + \frac{1 - \epsilon}{r})$. 

\begin{lem}
After removing vertices in $V_q$, the length of the shortest path is at least $a(b - r + 2)$.
\label{lem:dict_vertex}
\end{lem}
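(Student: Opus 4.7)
The plan is to follow the template of Lemma~\ref{lem:dict_edge} --- consider an arbitrary $s$-$t$ path surviving the cut, classify its internal edges as forward/backward and short/long, and bound its length via a single algebraic identity --- while contending with the new wrinkle that in this construction, unlike in the edge-cut graph, long edges also respect compatibility and therefore shift each coordinate of $x$ by $\pm 1$ just like short edges do.

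Fix a path $p = (s, v^{i_1}_{x^1}, \ldots, v^{i_z}_{x^z}, t)$ in the graph with $V_q$ removed and set $y_j := (x^j)_q$. Since every vertex whose coordinate $q$ equals $*$ or $0$ has been deleted, $y_j \in \{1, \ldots, r-1\}$ throughout, and compatibility forces $y_{j+1} - y_j \equiv \pm 1 \pmod r$ on every internal edge, short or long. Because the walk on coordinate $q$ never visits $0$ or $r$, these modular $\pm 1$ steps are honest integer $\pm 1$ steps on $\{1, \ldots, r-1\}$. Letting $F$ and $B$ count internal edges on which the layer index moves forward and backward respectively, this yields the crucial bound $|F - B| = |y_z - y_1| \leq r - 2$.

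Now split $F = z_{\SSF} + z_{\SLF}$ and $B = z_{\SSB} + z_{\SLB}$, and let $L_F, L_B$ be the total layer displacements from long forward and backward edges, so that $L_F \geq 2 z_{\SLF}$ and $L_B \geq 2 z_{\SLB}$. Combining the telescoping identity $i_z - i_1 = (z_{\SSF} - z_{\SSB}) + (L_F - L_B)$ with the prefix and suffix contributions $a i_1 + 1$ and $a(b - i_z) + 1$ from the $s$- and $t$-edges gives the clean identity
\[
\len(p) = 2 + ab + (1-a) z_{\SSF} + (1+a) z_{\SSB} + 2 a L_B.
\]
The coefficient of $z_{\SSF}$ is nonpositive, so I would substitute the upper bound $z_{\SSF} \leq z_{\SSB} + z_{\SLB} - z_{\SLF} + (r-2)$ from $F - B \leq r - 2$ and regroup, producing nonnegative multiples of $z_{\SSB}$, $z_{\SLF}$, and $z_{\SLB}$ --- the last of which uses $2 a L_B \geq 4 a z_{\SLB}$ to dominate the otherwise-negative $(1-a) z_{\SLB}$ term by $(1 + 3a) z_{\SLB}$ --- leaving
\[
\len(p) \geq 2 + ab + (r-2)(1-a) = r + a(b - r + 2) \geq a(b - r + 2).
\]

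The hard part is essentially clerical, namely tracking signs through the regrouping step, but the one genuinely new conceptual point is that here the $y$-constraint $|F - B| \leq r - 2$ restricts the combined short-plus-long forward count rather than the short-forward count alone; this is exactly what arranges the backward-long penalty $2 a L_B$ to cancel the otherwise troublesome $(1-a) z_{\SLB}$ contribution spawned by the $z_{\SSF}$ substitution.
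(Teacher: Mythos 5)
Your proof is correct and takes essentially the same route as the paper's: track the coordinate $y_j = (x^j)_q$, observe that since $0$ and $*$ are removed it is confined to $\{1,\dots,r-1\}$ and changes by exactly $\pm 1$ on every internal edge (long or short), derive the constraint relating forward/backward edge counts, and combine it with the layer-telescoping identity to bound the length. The only difference is cosmetic: the paper subtracts inequalities (\ref{eq:vertex1}) and (\ref{eq:vertex2}) directly, whereas you first derive the closed-form identity $\len(p)=2+ab+(1-a)z_{\SSF}+(1+a)z_{\SSB}+2aL_B$ and then substitute the bound on $z_{\SSF}$; your regrouping incidentally gives the marginally sharper constant $r+a(b-r+2)$.
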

\begin{proof}
Let $p = (s, v^{i_1}_{x^1}, \dots, v^{i_z}_{x^z}, t)$ be a path from $s$ to $t$ where $i_j \in \{ 0, \dots, b \}$ and $x^j \in \Omega^R$ for each $1 \leq j \leq z$. Let $y_j := (x^j)_q \in \{ 0, \dots, r - 1 \}$ for each $1 \leq j \leq z$. 

For each $1 \leq j < z$, the edge $(v^{i_j}_{x^j}, v^{i_{j + 1}}_{x^{j +1}})$ is either a long edge or a short edge, and either taken forward (i.e., $i_j < i_{j + 1}$) or backward (i.e., $i_j > i_{j + 1}$). Let $z_{\SLF}, z_{\SSF}, z_{\SLB}, z_{\SSB}$ be the number of long edges taken forward, short edges taken forward, long edges taken  backward, and shot edges taken backward, respectively ($z_{\SLF} + z_{\SSF} + z_{\SLB} + z_{\SSB} = z - 1$). For $1 \leq j \leq z_{\SLF}$ (resp. $z_{\SLB}$), consider the $j$th long edge taken forward (resp. backward) --- it is $(v^{i_{j'}}_{x^{j'}}, v^{i_{j'+1}}_{x^{j'+1}})$ for some $j'$. Let $s^{\SF}_j$ (resp. $s^{\SB}_j$) be $|i_{j'} - i_{j' + 1}|$. The following equality holds by observing how $i_j$ changes. 
\begin{equation}
i_1 + \sum_{j=1}^{z_{\SLF}} s^{\SF}_j + 
z_{\SSF} - \sum_{j=1}^{z_{\SLB}} s^{\SB}_j - z_{\SSB} = i_{z} 
\quad \Rightarrow \quad 
i_1 + \sum_{j=1}^{z_{\SLF}} s^{\SF}_j + 
z_{\SSF} - z_{\SLB} - z_{\SSB} - i_{z} \geq 0.
\label{eq:vertex1}
\end{equation}
Consider how $y_j$ changes. Taking any edge forward increases $y_j$, and taking any edge backward decreases $y_j$. Since $y_j$ can never be $0$ or $*$, we can conclude that 
\begin{equation}
 z_{\SLF} + z_{\SSF} -  z_{\SLB} - z_{\SSB} \leq r - 2.
\label{eq:vertex2}
\end{equation}
$\eqref{eq:vertex1} - \eqref{eq:vertex2}$ yields
\begin{equation}
i_1  - i_{z} + \sum_{j=1}^{z_{\SLF}} (s^{\SF}_j - 1) 
\geq 2 - r 
\quad \Rightarrow \quad 
i_1  - i_{z} + \sum_{j=1}^{z_{\SLF}} s^{\SF}_j 
\geq 2 - r.
\label{eq:vertex3}
\end{equation}
The total length of $p$ is 
\begin{align*}
& 2 + a \big( i_1 + b - i_{z} + \sum_{j=1}^{z_{\SLF}} s^{\SF}_j + \sum_{j=1}^{z_{\SLB}} s^{\SB}_j) + z_{\SSF} + z_{\SSB}  \\
\geq \quad & a \big( i_1 + b - i_{z} + \sum_{j=1}^{z_{\SLF}} s^{\SF}_j) \\
\geq \quad & a(b - r + 2).
\end{align*}
\end{proof}

\paragraph{Soundness.}
To analyze soundness, we define a correlated probability space $(\Omega_1 \times \Omega_2, \nu)$ where both $\Omega_1, \Omega_2$ are copies of $\Omega = \{ 0, \dots, r - 1, * \}$. It is defined by the following process to sample $(x, y) \in \Omega^2$.
\begin{itemize}
\item Sample $x \in \{0, \dots, r - 1 \}$. Let $y = (x + 1) \mod r$. 
\item Change $x$ to $*$ with probability $\epsilon$. Do the same for $y$ independently. 
\end{itemize}
Note that the marginal distribution of both $x$ and $y$ is equal to $\mu$. Assuming $\epsilon < \frac{1}{2r}$, the minimum probability of any atom in $\Omega_1 \times \Omega_2$ is $\epsilon^2$.
Furthermore, in our correlated space, $\nu(x, *) > 0$ for all $x \in \Omega_1$ and $\nu(*, x) > 0$ for all $x \in \Omega_2$. Therefore, we can apply Lemma~\ref{lem:mossel_connected} to conclude that 
$\rho(\Omega_1, \Omega_2 ; \nu) \leq \rho := 1 - \frac{\epsilon^4}{2}$.

Apply Theorem~\ref{thm:mossel} ($\rho \leftarrow \rho, \alpha \leftarrow \epsilon^2, \epsilon \leftarrow \frac{\Gamma_{\rho}(\frac{\epsilon}{3}, \frac{\epsilon}{3})}{2} $) to get $\tau$ and $d$. We will later apply this theorem with the parameters obtained here. 
Fix an arbitrary subset $C \subseteq V$, and $C_i := C \cap v^i$.  
For $0 \leq i \leq b$, call $v^i$ {\em blocked} if $\mu^{\otimes R} [C_i(x)] \geq 1 - \epsilon$. At most $\lfloor \frac{\w(C)}{1 - \epsilon} \rfloor$ $v^i$'s can be blocked. Let $k'$ be the number of blocked $v^i$'s, and $z = b + 1 - k'$ be the number of unblocked $v^i$'s. Let $\{ v^{i_1}, \dots, v^{i_{z}} \}$ be the set of unblocked $v^i$'s with $i_1 < i_2 < \dots < i_{z}$. 

For $1 \leq j \leq z$, let $S_j \subseteq v^{i_j}$ be such that $x \in S_j$ if there exists a path $(p_0 = s, p_1, \dots, p_{j - 1}, v^{i_j}_x)$ such that each $p_{j'} \in v^{i_{j'}} \setminus C$ ($1 \leq j' < j$). 
For $1 \leq j \leq z$, let $f_j : \Omega^R \mapsto [ 0, 1  ]$ be the indicator function of $S_j$.

We prove that if none of $f_j$ reveals any {\em influential coordinate}, $\mu^{\otimes R} (S_z) > 0$. 
Since any path passing $v^{i_1}, \dots, v^{i_z}$ (bypassing only blocked $v^i$'s) uses short edges at least $b - 2k'$ times, so the length of the shortest path after removing $C$ is at most $2 + (b - 2k') + 2 a k'$. 

\begin{lem}
Suppose that for any $1 \leq j \leq z$ and $1 \leq i \leq R$, $\Inf_i^{\leq d}[f_j] \leq \tau$. Then $\mu^{\otimes R}(S_z) > 0$. 
\end{lem}
\begin{proof}
We prove by induction that $\mu^{\otimes R} (S_j) \geq \frac{\epsilon}{3}$. It holds when $j = 1$ since $v^{i_1}$ is unblocked. Assuming $\mu^{\otimes R} (S_j) \geq \frac{\epsilon}{3}$, since $S_j$ does not reveal any influential coordinate, Theorem~\ref{thm:mossel} shows that for any subset $T_{j + 1} \subseteq v^{i_{j + 1}}$ with $\mu^{\otimes R}(T_{j + 1}) \geq \frac{\epsilon}{3}$, there exists an edge between $S_j$ and $T_{j + 1}$. If $S'_{j + 1} \subseteq v^{i_{j + 1}}$ is the set of neighbors of $S_j$, we have $\mu^{\otimes R} (S'_{j + 1}) \geq 1 - \frac{\epsilon}{3}$. Since $v^{i_{j+1}}$ is unblocked, $\mu^{\otimes R} (S'_{j + 1} \setminus C) \geq \frac{2\epsilon}{3}$, completing the induction. 
\end{proof}

In summary, in the completeness case, if we cut vertices of total weight $k := k(a, b, r, \epsilon) = (b + 1)(\epsilon + \frac{1 - \epsilon}{r})$, the length of the shortest path is at least $l := l(a, b, r, \epsilon) = a(b - r + 2)$. 
In the soundness case, even after cutting vertices of total weight $k'$, the length of the shortest path is at most $2 + (b - \frac{k'}{1 - \epsilon}) + 2a (\frac{k'}{1 - \epsilon})$. 

\begin{itemize}
\item Let $a = 4, b = 2r - 2$ and $\epsilon$ small enough so that $k \leq 2, l = 4r$. Requiring $l' \geq l$ results in $k' = \Omega(r)$, giving a gap of $\Omega(r) = \Omega(l)$ for Length Bounded Cut.

\item Let $a = r, b = 2r - 2$ and $\epsilon$ small enough so that $k \leq 2, l = r^2$. Requiring $k' \leq 2$ results in $l' = O(r)$, giving a gap of $\Omega(r) = \Omega(\sqrt{l})$ for Shortest Path Interdiction. Generally, $k' \leq O(r^{\epsilon})$ results in $l' \leq O(r^{1 + \epsilon})$, giving an $(O(r^{\epsilon}), O(r^{1 - \epsilon}))$-bicriteria gap for any $\epsilon \in (0, 1)$. 
\end{itemize}

\section{RMFC}
\label{sec:test_rmfc}
We present our dictatorship test for the RMFC problem. Our test is inspired by the integrality gap example in Chalermsook and Chuzhoy~\cite{CC10}, which is suggested by Khanna and Olver.
This test will be used in Section~\ref{sec:ug} to prove the hardness result based on Conjecture~\ref{conj:ug2}. All graphs in this section are undirected. 
We will prove hardness of RMFC where $T = \{ t \}$ for a single vertex $t$. 

Given positive integers $b$ and $R$, let $B = (b!) \cdot (\sum_{i=1}^b \frac{b!}{i})$, $\Omega = \{ *, 1, \dots, B \}^R$. 
Consider the probability space $(\Omega, \mu)$ where $\mu : \Omega \mapsto [0, 1]$ with $\mu(*) = \epsilon$ and $\mu(x) = \frac{1 - \epsilon}{B}$ for $x \neq *$. 
We define $\calD^{\SF}_{b, R, \epsilon} = (V, E)$ as follows. 

\begin{itemize}
\item $V = \{ s, t \} \cup (\{ v^i_x \}_{1 \leq i \leq b, x \in \Omega^R})$. Let $v^i := \{ v^i_x \}_{x \in \Omega^R}$. The weight a vertex $v^i_x$ is $i \cdot \mu^{\otimes R}(x)$. 

\item There is an edge from $s$ to each vertex in $L_1$, from each vertex in $L_{b}$ to $t$.

\item For $x, y \in \Omega^R$, we call that $x$ and $y$ are {\em compatible} if 
\begin{itemize}
\item For any $1 \leq j \leq R$: [$y_j = x_j$] or [$y_j = *$] or [$x_j = *$]. 
\end{itemize}

\item For any $0 \leq i < b$ and compatible $x, y \in \Omega^R$, we have an edge $(v^i_x, v^{i+1}_y)$. 
\end{itemize}

\paragraph{Completeness.} 
We first prove that vertex cuts that correspond to {\em dictators} are efficient. Let $H_i = 1 + \frac{1}{2} + \dots + \frac{1}{i} = \frac{\sum_{j=1}^i \frac{i!}{j}}{i!}$ be the $i$th harmonic number. 
For $1 \leq i \leq b$, let $B_i = \frac{H_i}{H_b}B$ and $B_0 = 0$. Each $B_i$ is an integer since $B = (b!) \cdot (\sum_{i=1}^b \frac{b!}{i})$, and
$\sum_{i = 1}^b (B_i - B_{i - 1}) = B$. 

For any $q \in [R]$, we consider the solution
where on Day $i$ ($1 \leq i \leq b$), we save 
\[
V_q^i := \{ v^i_x : x_q = * \mbox{ or } B_{i - 1} + 1 \leq x_q \leq B_i \}.
\] 
Note each day the total weight 
that the total weight of $V_q$ is $i(\epsilon + \frac{1}{i \cdot H_b}) \leq b\epsilon + \frac{1}{H_b}$. 

\begin{lem}
In above solution, $t$ is never burnt.
\label{lem:dict_fire}
\end{lem}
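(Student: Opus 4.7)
The plan is to prove, by induction on the day number $i$ from $1$ to $b$, the invariant that immediately after the Day-$i$ actions (first save $V_q^i$, then let fire spread one step), the set of burning vertices in layer $v^i$ equals exactly $B^i := \{v^i_x : x_q \in \{B_i+1, \dots, B\}\}$, and no vertex in a later layer $v^j$ ($j > i$) has ignited yet. Once this invariant is established for $i = b$, the conclusion is immediate: since $B_b = B$ by construction, $B^b = \emptyset$, and since $t$ is adjacent only to vertices of $v^b$, no burning vertex ever sits next to $t$. I will also need to argue that no subsequent day revives the spread so that $t$ remains safe forever.

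For the base case $i = 1$, $s$ is the only burning vertex on Day $0$, and after saving $V_q^1 = \{v^1_x : x_q = * \text{ or } 1 \leq x_q \leq B_1\}$, the fire from $s$ spreads to all unsaved neighbors, which is precisely the complement of $V_q^1$ inside $v^1$, namely $B^1$. For the inductive step, assume the burning vertices in $v^{i-1}$ form $B^{i-1}$. A vertex $v^i_y$ becomes burning on Day $i$ iff it is not in $V_q^i$ and is compatible with some $v^{i-1}_x \in B^{i-1}$. Examining compatibility on coordinate $q$, since every $x$ in $B^{i-1}$ has $x_q \neq *$ and we need $y_q \neq *$ (otherwise $y$ would be saved), the condition forces $y_q = x_q \in \{B_{i-1}+1, \dots, B\}$; intersecting with the unsaved condition $y_q \notin \{*\} \cup \{B_{i-1}+1, \dots, B_i\}$ yields $y_q \in \{B_i+1, \dots, B\}$. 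Conversely, any such $y$ has a burning compatible neighbor $v^{i-1}_y$ (take $x = y$), so the characterization is exact.

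The remaining concern is whether fire might later leak backward from burning $v^i$ vertices into earlier layers along some unexpected path, or whether on days after Day $b$ new vertices could ignite. I will handle this by checking that every unsaved and unburnt vertex $v^i_y$ (necessarily having $y_q \in \{1, \dots, B_{i-1}\}$) is incompatible on coordinate $q$ with every member of $B^{i-1}$ and $B^{i+1}$, because $y_q$ is a non-star value strictly smaller than any element of $\{B_{i-1}+1, \dots, B\}$. Consequently the fire front stabilizes after Day $b$, no further ignition occurs, and $t$ is never burnt.

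The main obstacle I expect is the careful case-analysis on the coordinate-$q$ values, confirming that the partition $\{*\} \sqcup \{1, \dots, B_{i-1}\} \sqcup \{B_{i-1}+1, \dots, B_i\} \sqcup \{B_i+1, \dots, B\}$ of the possible values of $y_q$ covers all cases and that the compatibility relation, combined with the nested structure of the $B_i$'s, cleanly separates the burning bucket from the forever-clean bucket without any crossover. Everything else is routine bookkeeping over the spread process.
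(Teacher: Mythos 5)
Your proof is correct but takes a genuinely different route from the paper's. You run a day-by-day induction that tracks the \emph{exact} burning set $B^i = \{v^i_x : x_q \in \{B_i+1,\dots,B\}\}$ and shows this front shrinks to $\emptyset$ at layer $b$ (since $B_b = B$), with a separate check that the nested partition $\{*\}\sqcup\{1,\dots,B_{i-1}\}\sqcup\{B_{i-1}+1,\dots,B_i\}\sqcup\{B_i+1,\dots,B\}$ forbids backward leakage or re-ignition after Day $b$. The paper instead argues about an arbitrary $s$-$t$ path $p$: since the path reaches layer $i_j$ no earlier than Day $j \geq i_j$, any $*$ in coordinate $q$ along $p$ is saved in time, forcing (by compatibility) the $q$-coordinate to be a \emph{constant} non-$*$ value $y$ along the whole path; then $y$ lies in some bucket $\{B_{r-1}+1,\dots,B_r\}$, so the path's layer-$r$ vertex is in $V_q^r$, saved on Day $r$. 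Both arguments are valid; yours is more explicit and mechanistic (it exposes the exact burning state and directly handles the spread dynamics and days past $b$), at the cost of a slightly heavier invariant — formally you should strengthen your invariant to ``for all $j \leq i$, burning in $v^j$ is exactly $B^j$'' since your inductive step in layer $i$ needs to know what's burning in layers $i-1$ and $i-2$, which you implicitly use when ruling out backward spread. The paper's path argument is shorter but leaves the fire-timing reasoning more implicit; your phrasing makes the timing fully transparent.
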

\begin{proof}
Fix an arbitrary $p = (s, v^{i_1}_{x^1}, \dots, v^{i_z}_{x^z}, t)$ from $s$ to $t$, and let $y_j = (x^j)_q$ ($1 \leq j \leq z$). 
Since $i_j \leq j$ for any $j$, $V_q^i$ is saved before we arrive $v^{i_j}_{x^j}$. Therefore $y_1 = y_2 = \dots = y_z$. There exists $r \in \{1, \dots, b \}$ such that $y \in \{B_{r - 1} + 1, \dots, B_{r} \}$. $p$ intersects $V^r_q$. 
\end{proof}

\paragraph{Soundness.}
To analyze soundness, we define a correlated probability space $(\Omega_1 \times \Omega_2, \nu)$ where both $\Omega_1, \Omega_2$ are copies of $\Omega = \{ *, 1, \dots, B \}$. It is defined by the following process to sample $(x, y) \in \Omega^2$.
\begin{itemize}
\item Sample $x \in \{1, \dots, B \}$. Let $y = x$. 
\item Change $x$ to $*$ with probability $\epsilon$. Do the same for $y$ independently. 
\end{itemize}
Note that the marginal distribution of both $x$ and $y$ is equal to $\mu$. Assuming $\epsilon < \frac{1}{2B}$, the minimum probability of any atom in $\Omega_1 \times \Omega_2$ is $\epsilon^2$. Furthermore, in our correlated space, $\nu(x, *) > 0$ for all $x \in \Omega_1$ and $\nu(*, x) > 0$ for all $x \in \Omega_2$. Therefore, we can apply Lemma~\ref{lem:mossel_connected} to conclude that 
$\rho(\Omega_1, \Omega_2 ; \nu) \leq \rho := 1 - \frac{\epsilon^4}{2}$.
Apply Theorem~\ref{thm:mossel} ($\rho \leftarrow \rho, \alpha \leftarrow \epsilon^2, \epsilon \leftarrow \frac{\Gamma_{\rho}(\frac{1}{3}, \frac{1}{3})}{2} $) to get $\tau$ and $d$. We will later apply this theorem with the parameters obtained here. 

Fix an arbitrary solution where we save $C_i \subseteq V$ on Day $i$ with $\w(C_i) \leq k'$. Let $S_i \subseteq v^i$ be the set of vertices of $v^i$ burnt at the end of Day $i$. Let $f_i : \Omega^R \mapsto [ 0, 1  ]$ be the indicator function of $S_i$ ($1 \leq i \leq b$). 
We prove that if none of $f_i$ reveals any {\em influential coordinate}, unless $k'$ is large, $\mu^{\otimes R}(S_i)$ is large for all $i$, so $t$ will be burnt on Day $b + 1$.

\begin{lem}
Suppose that for any $1 \leq i \leq b$ and $1 \leq j \leq R$, $\Inf_j^{\leq d}[f_i] \leq \tau$. 
If $k' \leq \frac{1}{3}$, $\mu^{\otimes R}(S_i) \geq \frac{1}{3}$ for all $1 \leq i \leq b$. 
\end{lem}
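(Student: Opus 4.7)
I would prove this by induction on $i$, with Theorem~\ref{thm:mossel} driving the inductive step to certify that $S_{i-1}$ reaches almost all of $v^i$ via compatible edges.

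A preliminary weight-accounting reduces the lemma to a purely combinatorial claim. Since a vertex in layer $i$ has weight $i\cdot\mu^{\otimes R}(x)$ and $\w(C_j) \le k'$ for every Day $j$, the $\mu^{\otimes R}$-mass of $C_j$ inside $v^i$ is at most $k'/i$; summing over $j \le i$ gives
\[
\mu^{\otimes R}\Big(v^i \cap \bigcup_{j\le i} C_j\Big) \le k' \le \tfrac{1}{3}.
\]
This is the only way the firefighter budget interacts with layer $i$. If $N_i \subseteq v^i$ is the set of vertices adjacent to some element of $S_{i-1}$, then $S_i = N_i \setminus \bigcup_{j\le i} C_j$, so the target $\mu^{\otimes R}(S_i) \ge 1/3$ reduces to showing $\mu^{\otimes R}(N_i) \ge 2/3$, or equivalently that the blocking set $v^i \setminus N_i$ has measure at most $1/3$.

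The base case $i = 1$ is immediate because $s$ is adjacent to every vertex of $v^1$, so $S_1 = v^1 \setminus C_1$ has mass at least $1 - k' \ge 2/3$. For the inductive step, assume $\mu^{\otimes R}(S_{i-1}) \ge 1/3$ and suppose for contradiction that the indicator $g$ of $v^i \setminus N_i$ had $\E[g] > 1/3$. By construction of $\nu$, the edges of the test graph between layers $i-1$ and $i$ are exactly the atoms of $\nu^{\otimes R}$ with positive probability; hence any $(x,y)$ contributing to $\E_{(x,y)\sim\nu^{\otimes R}}[f_{i-1}(x)\,g(y)]$ would place $y \in N_i$ and kill $g(y)$, so this expectation equals $0$. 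Applying Theorem~\ref{thm:mossel} to $(f_{i-1}, g)$ with the parameters fixed just before the lemma then produces the contradiction
\[
0 \;\ge\; \Gamma_\rho(\tfrac{1}{3},\tfrac{1}{3}) - \tfrac{1}{2}\Gamma_\rho(\tfrac{1}{3},\tfrac{1}{3}) > 0,
\]
so $\mu^{\otimes R}(N_i) \ge 2/3$, which combined with the accounting above yields $\mu^{\otimes R}(S_i) \ge 2/3 - k' \ge 1/3$ and closes the induction.

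The main technical point I expect to flag is that the blocking-side indicator $g$ has no \emph{a priori} low-influence guarantee. This is resolved precisely because Theorem~\ref{thm:mossel} only asks the minimum of the two influences on each coordinate to be at most $\tau$, and the hypothesis on $f_{i-1}$ supplies that. Beyond this, the structure closely mirrors the soundness analyses in Sections~\ref{sec:test_edge} and~\ref{sec:test_vertex}, with the only new wrinkle being the layer-dependent vertex weights $i\cdot\mu^{\otimes R}(x)$ used in the initial accounting.
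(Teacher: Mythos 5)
Your proof is correct and follows essentially the same approach as the paper: induction on the layer index, applying Theorem~\ref{thm:mossel} to the low-influence function $f_{i-1}$ and the indicator of the unreached part of $v^i$ to conclude that $N(S_{i-1})$ has measure at least $2/3$, followed by the weight accounting that the layer-$i$ weight $i\cdot\mu^{\otimes R}(x)$ exactly compensates the accumulated budget $ik'$ so the saved mass inside $v^i$ is at most $k' \le 1/3$. The only cosmetic differences are that you state the Mossel step via an explicit contradiction (the paper phrases it as a direct consequence, with what appears to be a sign typo on the measure threshold) and that you sum the per-day bound $k'/i$ rather than dividing the total saved weight by $\w(v^i) = i$; these are equivalent.
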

\begin{proof}
We prove by induction on $i$. 
It is easy to see $\mu^{\otimes R}(S_1) \geq \frac{1}{3}$ since the $\w(v^1) = 1$ but $k' \leq \frac{1}{3}$. Suppose that the claim holds for $i$. For any $T \subseteq v^{i + 1}$ with $\mu^{\otimes R}(T) \leq \frac{1}{3}$, since $S_i$ does not reveal any influential coordinate, Theorem~\ref{thm:mossel} shows that there exists an edge between $S_{i}$ and $T$. It implies that $\mu^{\otimes R}(N(S_i)) \geq \frac{2}{3}$, where $N(S_i) \subseteq v^{i + 1}$ denotes the set of neighbors of $S_i$ in $v^{i + 1}$. The total weight of saved vertices up to Day $i$ is at most $i k' \leq \frac{i}{3}$. 
Since $\w(v^i) = i$, even if all saved vertices are in $v^i$, $\mu^{\otimes R} (v^i \cap (C_1 \cup \dots \cup C_i)) \leq \frac{1}{3}$. Since $S_{i + 1} = N(S_i) \setminus (C_1 \cup \dots \cup C_i)$, $\mu^{\otimes R}(S_{i + 1}) \geq \frac{1}{3}$, the induction is complete. 
\end{proof}

In summary, in the completeness case, 
we save vertices of total weight at most $b\epsilon + \frac{1}{H_b}$ and save $t$. In the soundness case, we fail to save $t$ unless we spend total weight at least $\frac{1}{3}$ each day. By taking $\epsilon$ small enough, the gap becomes $\Omega(\log b)$.

\section{Unique Games Hardness}
\label{sec:ug}

\subsection{UGC and Variant}

We introduce the Unique Games Conjecture and its equivalent variant.

\begin{defi}
An instance $\mathcal{L}(B(U_B \cup W_B, E_B), [R], \left\{ \pi(u, w) \right\}_{(u, w) \in E_B})$ of Unique Games consists of a biregular bipartite graph $B(U_B \cup W_B, E_B)$ and a set $[R]$ of labels. For each edge $(u, w) \in E_B$ there is a constraint specified by a permutation $\pi(u, w) : [R] \rightarrow [R]$. The goal is to find a labeling $l : U_B \cup W_B \rightarrow [R]$ of the vertices such that as many edges as possible are satisfied, where an edge $e = (u, w)$ is said to be satisfied if $l(u) = \pi(u, w)(l(w))$. 
\end{defi}
\begin{defi}
Given a Unique Games instance $\mathcal{L} (B(U_B \cup W_B, E_B), [R], \left\{ \pi(u, w) \right\}_{(u, w) \in E_B})$, let $\mathsf{Opt}(\mathcal{L})$ denote the maximum fraction of simultaneously-satisfied edges of $\mathcal{L}$ by any labeling, i.e.
\begin{equation*}
\mathsf{Opt}(\mathcal{L}) := \frac{1}{|E_B|} \ \max_{l : U_B \cup W_B \rightarrow [R] } |\left\{ e \in E : l \mbox{ satisfies }e \right\} |.
\end{equation*}
\end{defi}
\begin{conj} [The Unique Games Conjecture~\cite{Khot02}] 
\label{conj:ug}
For any constants $\eta > 0$, there is $R = R(\eta)$ such that, for a Unique Games instance $\mathcal{L}$ with label set $[R]$, it is NP-hard to distinguish between
\begin{itemize}
\itemsep=0ex
\item $\mathsf{opt}(\mathcal{L}) \geq 1 - \eta$. 
\item $\mathsf{opt}(\mathcal{L}) \leq \eta$. 
\end{itemize}
\label{conj:ug}
\end{conj}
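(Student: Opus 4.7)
The final statement is the Unique Games Conjecture itself, which the author states as an assumption rather than proves; it is a central open problem in complexity theory that has resisted attack for over two decades. So strictly speaking, there is no proof to propose within this paper. What I would write in place of a proof is a short remark justifying why the conjecture is being used as a black box, and then turn to the follow-up content that section~\ref{sec:ug} seems headed toward: stating the ``equivalent variant'' promised by the subsection heading (the bipartite/biregular form), and eventually formulating Conjecture~\ref{conj:ug2}, which is the stronger expansion-flavored variant that will power the RMFC hardness.

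If I were nevertheless to sketch a real attack on Conjecture~\ref{conj:ug}, the broad plan would follow the standard PCP-composition roadmap. First, start from a canonical NP-hard gap problem such as Label Cover obtained from a PCP for \textsf{3SAT} (via Raz's parallel repetition theorem, one already gets arbitrarily small soundness). Second, use a length-preserving inner verifier or coding-theoretic transformation (long code, Hadamard code, or more recently Grassmann/Johnson-style encodings) to collapse the label-cover projection constraints into permutation constraints while preserving a $(1-\eta, \eta)$ gap. Third, verify that the resulting instance is biregular and has label set of constant size $R = R(\eta)$, which is automatic from the gadget construction.

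The main obstacle, and the reason the UGC remains open, lies precisely in step two: every known reduction from Label Cover produces constraints that are more general than bijections, typically projections $\pi : [R_2] \to [R_1]$ with $R_2 > R_1$. Forcing the projection to be a permutation without destroying the soundness gap is the heart of the difficulty, and is what the Khot--Minzer--Safra $2$-to-$2$ Games Theorem partially achieved (yielding $1/2$ soundness rather than arbitrarily small $\eta$). A full resolution would presumably require a Grassmann-type agreement theorem strong enough to drive soundness to $\eta$ while maintaining a rigid $1$-to-$1$ structure on constraint sides, which is well beyond current techniques.

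For the purposes of this paper, I would simply state Conjecture~\ref{conj:ug} as given, immediately note (without proof) the folklore equivalence to the bipartite biregular formulation used throughout the invariance-principle hardness literature, and then proceed to state Conjecture~\ref{conj:ug2} with its additional expansion property, pointing to Bansal--Khot~\cite{BK09} for the provenance of the stronger variant. The actual work of the section is the reduction from Unique Games to the various cut problems, using the length-control dictatorship tests of Sections~\ref{sec:test_multicut}--\ref{sec:test_rmfc}; that is where technical effort should be spent, not on the conjecture itself.
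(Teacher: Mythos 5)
You correctly recognize that Conjecture~\ref{conj:ug} is exactly that, a conjecture: the paper states it as an unproven assumption (citing Khot~\cite{Khot02}) and supplies no proof, just as you indicate. Your remaining discussion of why a proof is out of reach and how the section proceeds to Conjectures~\ref{conj:ug1} and~\ref{conj:ug2} accurately mirrors the paper's treatment, so there is nothing to correct.
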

To show the optimal hardness result for Vertex Cover, Khot and Regev~\cite{KR08} introduced the following seemingly stronger conjecture, and proved that it is in fact equivalent to the original Unique Games Conjecture.

\begin{conj} [Khot and Regev~\cite{KR08}] 
\label{conj:ug_variant}
For any constants $\eta > 0$, there is $R = R(\eta)$ such that, for a Unique Games instance $\mathcal{L}$ with label set $[R]$, it is NP-hard to distinguish between
\begin{itemize}
\itemsep=0ex
\item There is a set $W' \subseteq W_B$ such that $|W'| \geq (1 - \eta)|W_B|$ and a labeling $l : U_B \cup W_B \rightarrow [R]$ that satisfies every edge $(u, w)$ for $v \in U_B$ and $w \in W'$.
\item $\mathsf{opt}(\mathcal{L}) \leq \eta$. 
\end{itemize}
\label{conj:ug1}
\end{conj}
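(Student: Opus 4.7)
The forward implication (Conjecture~\ref{conj:ug1} $\Rightarrow$ Conjecture~\ref{conj:ug}) is essentially immediate. In any biregular instance, a labeling that fully satisfies every edge incident to a $(1-\eta)$-fraction of vertices on the $W$-side automatically satisfies at least a $(1-\eta)$-fraction of all edges. So a YES instance of Conjecture~\ref{conj:ug1} is a YES instance of Conjecture~\ref{conj:ug} with the same soundness parameter, and the real work is in the reverse direction.

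For the reverse implication, the plan is to reduce a standard Unique Games instance $\mathcal{L}$ (with completeness $1-\eta'$ and soundness $\eta'$ guaranteed by Conjecture~\ref{conj:ug}, for an $\eta'$ chosen much smaller than $\eta$) to a new instance $\mathcal{L}'$ whose YES case has the structured-completeness property of Conjecture~\ref{conj:ug1}. The basic idea is a ``subsampled-neighborhood'' construction: replace each $w \in W_B$ by many copies $(w, S)$, where $S$ ranges over subsets of $N(w)$ of some fixed size $s$, and put an edge from $u$ to $(w, S)$ (with the original constraint $\pi_{u,w}$) for every $u \in S$. Given any labeling $l$ of $\mathcal{L}$ satisfying a $(1-\eta')$-fraction of edges, the expected fraction of violated edges incident to a random copy $(w, S)$ is $\le s\eta'$; provided $\eta'$ is chosen at most $\eta^2/s$, Markov's inequality implies that all but an $\eta$-fraction of copies have \emph{every} incident edge satisfied, and taking $W'$ to be this set of fully-satisfied copies yields the desired structured completeness.

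For soundness, I would argue contrapositively. Suppose $\mathsf{Opt}(\mathcal{L}') \ge \eta$ via a labeling $l'$; induce a (possibly randomized) labeling $l$ of $\mathcal{L}$ by setting $l(u) := l'(u)$ and choosing $l(w)$ according to a plurality or random vote over $l'((w, S))$ for $S \ni $ a random $u \in N(w)$. A standard averaging argument, using the fact that edges of $\mathcal{L}'$ carry exactly the constraints of $\mathcal{L}$ and each edge $(u,w) \in E_B$ is present under the image of $\binom{N(w)-1}{s-1}/\binom{N(w)}{s}$ fraction of copies, lets one lower-bound the value of $l$ on $\mathcal{L}$ by some $\Omega_{\eta, s}(1)$. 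Choosing $\eta'$ smaller than this quantity yields a contradiction with $\mathsf{Opt}(\mathcal{L}) \le \eta'$.

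The main obstacle is parameter juggling in the soundness step: the decoding inevitably loses a factor depending on $s$, while the completeness step forces $s$ to grow as $\eta$ shrinks, so $\eta'$ must be chosen very small compared to $\eta$ to close the loop. One must verify that the decoded labeling really beats the soundness threshold of the starting instance --- this typically requires either a careful plurality-decoding argument together with a union bound over $S$, or alternatively a ``list-decoding'' variant where one shows that some label in a short candidate list from the $(w, S)$ copies satisfies a significant fraction of $w$'s edges. Once this is done, the two bounds combine into the desired hardness gap for Conjecture~\ref{conj:ug1}.
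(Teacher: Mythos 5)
The paper does not reprove this statement; it is attributed to Khot and Regev~\cite{KR08}, who showed the equivalence between this variant and the original Unique Games Conjecture, so there is no in-paper proof to compare your proposal against.

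Your forward implication is fine, and the subsampled-neighborhood construction is a reasonable vehicle for the hard direction; the completeness calculation (Markov plus a union bound, giving $\eta' \lesssim \eta^2/s$) is also sound. The gap is in the soundness sketch. You decode $l(w)$ by a ``plurality or random vote over $l'((w, S))$ for $S$ containing a random $u \in N(w)$,'' and claim this lower-bounds the original value by $\Omega_{\eta,s}(1)$. It does not: picking a random $u$ and then a random $S \ni u$ yields a uniformly random $S$, so the random-vote decoding attains expected value $\E_{(u,w)}\E_{S \text{ uniform}}\bigl[\mathbf{1}(l'(u) = \pi_{u,w}(l'((w,S))))\bigr]$, whereas the value of $\mathcal{L}'$ averages only over $S \ni u$. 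Relating the two loses a factor $s/|N(w)|$, which vanishes with the degree and is therefore not $\Omega_{\eta,s}(1)$. The plurality variant has the same trouble, since a label popular among $S \ni u$ need not be popular among $S \ni u'$ and the copies $(w,S)$ are free to disagree --- e.g., setting $l'((w,S)) = \pi_{u^*,w}^{-1}(l'(u^*))$ for a canonical element $u^* \in S$ already attains $\mathcal{L}'$-value about $1/s$ with essentially no useful agreement across different copies of $w$.

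What closes the soundness is to decode $w$ from $l'|_{U_B}$ alone and ignore the labels $l'((w,S))$: set $l(w)$ to the label $\sigma$ maximizing $|\{u \in N(w) : l'(u) = \pi_{u,w}(\sigma)\}|$, and let $h^*_w$ be that fraction. A second-moment (birthday-paradox) estimate on the partition $\{H_\sigma(w)\}_\sigma$ of $N(w)$ into blocks satisfied by each $\sigma$ gives $\E_S\bigl[\max_\sigma |S \cap H_\sigma(w)|\bigr] \leq 1 + O(s^3)\, h^*_w$ for every $w$; dividing by $s$ and averaging yields $\mathsf{Opt}(\mathcal{L}') \leq 1/s + O(s^2)\,\E_w[h^*_w]$. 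Since $\E_w[h^*_w]$ is exactly the $\mathcal{L}$-value achieved by this decoding, any $\mathcal{L}'$-value noticeably above the $1/s$ floor forces $\mathsf{Opt}(\mathcal{L}) = \Omega(\eta/s^2)$; taking $s \approx 1/\eta$ and $\eta' \approx \eta^3$ then settles both directions. The $1/s$ floor you flagged is genuine, but the crux is that exceeding it forces a large agreeing block $H_\sigma$ for many $w$ --- an observation about $l'|_{U_B}$ alone, not about any vote over the $(w,S)$ labels, which is the step missing from your sketch.
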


For RMFC, we use the following variant of Unique Games, which is not known to be equivalent to the original Conjecture.
\begin{conj}
\label{conj:ug_variant2}
For any constants $\eta > 0$, there is $R = R(\eta)$ such that, for a Unique Games instance $\mathcal{L}$ with label set $[R]$, it is NP-hard to distinguish between
\begin{itemize}
\itemsep=0ex
\item There is a set $W' \subseteq W_B$ such that $|W'| \geq (1 - \eta)|W_B|$ and a labeling $l : U_B \cup W_B \rightarrow [R]$ that satisfies every edge $(u, w)$ for $v \in U_B$ and $w \in W'$.
\item $\mathsf{opt}(\mathcal{L}) \leq \eta$. Moreover, the instance satisfies the following expansion property: For every set $S \subseteq W_B$, $|S| = \frac{|W_B|}{10}$, we have $|N(S)| \geq \frac{9}{10} |U_B|$, where $N(S) := \{ v \in U_B: \exists w \in S, (v, w) \in E_B \}$. 
\end{itemize}
\label{conj:ug2}
\end{conj}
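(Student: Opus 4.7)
The plan is to derive Conjecture~\ref{conj:ug2} from the stronger Unique Games variant proposed by Bansal and Khot~\cite{BK09}, which the paper already acknowledges as the intended source. Their conjecture strengthens Conjecture~\ref{conj:ug1} by imposing a stringent one-sided expansion requirement in the soundness case: for every constant $\delta > 0$, there exist hard biregular Unique Games instances such that in the soundness case, every $S \subseteq W_B$ with $|S| \geq \delta |W_B|$ satisfies $|N(S)| \geq (1-\delta)|U_B|$, while the completeness and soundness values still match those of Conjecture~\ref{conj:ug1}. Since the Bansal--Khot condition is strictly more demanding than the expansion requirement stated in Conjecture~\ref{conj:ug2}, the route is a direct specialization.

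Given the Bansal--Khot assumption, I would proceed in three short steps. First, for the target parameter $\eta$ of Conjecture~\ref{conj:ug2}, invoke the Bansal--Khot conjecture with soundness parameter $\eta$ and expansion parameter $\delta := 1/100$ (any $\delta \le 1/10$ works) to obtain a Unique Games instance $\mathcal{L}$ on a biregular bipartite graph $B(U_B \cup W_B, E_B)$. Second, observe that the completeness guarantee carries over verbatim: by Bansal--Khot there is a subset $W' \subseteq W_B$ with $|W'| \geq (1-\eta)|W_B|$ and a labeling $\ell$ that satisfies every edge $(u,w)$ with $w \in W'$, which is exactly the completeness requirement of Conjecture~\ref{conj:ug2}. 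Third, in the soundness case I need to verify both $\mathsf{Opt}(\mathcal{L}) \le \eta$ and the expansion condition. The first is immediate, and for the second, any $S \subseteq W_B$ with $|S| = |W_B|/10 \ge \delta |W_B|$ satisfies, by the Bansal--Khot expansion, $|N(S)| \ge (1-\delta)|U_B| = \tfrac{99}{100}|U_B| \ge \tfrac{9}{10}|U_B|$, which is exactly the expansion demanded by Conjecture~\ref{conj:ug2}.

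The main obstacle, and the reason Conjecture~\ref{conj:ug2} is stated rather than proved, is that the Bansal--Khot conjecture itself is not known to follow from the standard UGC (Conjecture~\ref{conj:ug}) or from the Khot--Regev variant (Conjecture~\ref{conj:ug1}); it is a distinct, seemingly stronger hypothesis. A genuinely unconditional derivation from Conjecture~\ref{conj:ug1} would require either an amplification step that imposes one-sided expansion on a bipartite Unique Games instance while preserving its completeness structure and soundness value, or a PCP construction that produces expanding UG instances directly. Standard graph products or noisy replacements usually destroy the permutation-constraint structure or the $(1-\eta)$-completeness form, so no such route is currently known. Consequently, our proposal is this straightforward reduction from Bansal--Khot, which matches exactly the hypothesis under which Theorem~\ref{thm:firefighter} is established.
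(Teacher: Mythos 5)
Your proposal matches the paper exactly: the statement is a conjecture, not a theorem, and the paper offers no proof of it---it only remarks (as you do) that it follows from the Bansal--Khot variant because the expansion requirement here is demanded only at the single threshold $\delta = \tfrac{1}{10}$ rather than for arbitrarily small $\delta$. Your derivation of the implication and your explanation of why no unconditional proof from Conjecture~\ref{conj:ug1} is known are both consistent with the paper's discussion.
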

Conjecture~\ref{conj:ug2} is similar to that of Bansal and Khot~\cite{BK09}, under which the optimal hardness of {\em Minimizing Weighted Completion Time with Precedence Constraints} is proved. Their conjecture requires that in the soundness case, $\forall S \subseteq W_B$ with $|S| = \delta |W_B|$, we must have $|N(S)| \geq (1 - \delta)|U_B|$ for arbitrarily small $\delta$. Our conjecture is a weaker (so more likely to hold) since we require this condition for only one value $\delta = \frac{1}{10}$. 

\subsection{General Reduction}
We now introduce our reduction from Unique Games to our problems \edgecutn, \vertexcutn, Directed Multicut, and RMFC. 
We constructed four dictatorship tests for $\calD^{\SE}_{a, b, r, R}$, $\calD^{\SV}_{a, b, r, R, \epsilon}$, $\calD^{\SM}_{r, k, R, \epsilon}$, and $\calD^{\SF}_{b, R, \epsilon}$. 
Fix a problem, and let $\calD = (\VD, \ED) $ be the dictatorship test for the problem with the chosen parameters. $\calD^{\SE}$ is edge-weighted and $\calD^{\SV}$, $\calD^{\SM}$ and $\calD^{\SF}$ are vertex-weighted, and our reduction will take care of this difference whenever relevant.

Given an instance $\mathcal{L}$ of Unique Games, we describe how to reduce it to a graph $G = (V_G, E_G)$. We assign to each vertex $w \in W_B$ a copy of $\VD$ --- formally, $V_G := \{ s, t \} \cup (W_B \times \VD)$ for \edgecutn, \vertexcutn, RMFC, and 
$V_G := \{ s_i, t_i \}_{i \in [k]} \cup (W_B \times \VD)$ for Directed Multicut. 
For any $w \in W_B, v \in \VD$, the vertex weight of $(w, v)$  is $\frac{\w(v)}{|W_B|}$, so that the sum of vertex weights is $b + 1$ for \vertexcut and $\frac{b(b+1)}{2}$ for RMFC, and $r^k$ for Directed Multicut. 

For a permutation $\sigma : [R] \rightarrow [R]$, let $x \circ \sigma := (x_{\sigma(1)}, \dots, x_{\sigma(R)})$. 
To describe the set of edges, consider the random process where $u \in U_B$ is sampled uniformly at random, and its two neighbors $w^1, w^2$ are independently sampled. 
For each edge $(v^{i_1}_{x^1}, v^{i_2}_{x^2}) \in E_{\calD}$, we create an edge $((w_1, v^{i_1}_{x^1 \circ \pi(u, w^1)}), (w_2, v^{i_2}_{x^2 \circ \pi(u, w^2)}))$. Call this edge is {\em created by} $u$. 
For \edgecut, the weight of each edge is the weight in $\calD^{\SE}$  times the probability that $(u, w_1, w_2)$ are sampled. The sum of weights is $b$. 
For each edge incident on a terminal (i.e., $(X, v^i_x)$ or $(v^i_x, X)$ where $X \in \{ s, t \} \cup \{ s_i, t_i \}_i$), we add the corresponding edge $(X, (w, v^i_x))$ or $((w, v^i_x), X)$ for each $w \in W_B$. For \edgecut, their wegiths are $\infty$ as in $\calD^{\SE}$.

\subsection{Completeness}
Suppose there exists a labeling $l$ and a subset $W' \subseteq W_B$ with $|W'| \geq (1 - \eta)|W_B|$ such that $l$ satisfy every edge incident on $W'$. 

\paragraph{\edgecut.}
For every triple $(u, w_1, w_2)$ such that $u \in U_B$ and $(u, w_1), (u, w_2) \in E_B$, we cut the following edges.
\[
\{ ((w_1, v^i_x), (w_2, v^{i + 1}_y) : 0 \leq i < b, y_{l(w_2)} \neq x_{l(w_1)} + 1 \mod R \mbox{ or } (x_{l(w_1)}, y_{l(w_2)}) = (0, 1) \}. 
\]
For $w \notin W'$, we additionally cut every edge incident on $\{ w \} \times \calD$. 
The total cost is at most $\frac{2b}{r} + 2 \eta b$. 
The completeness analysis for the dictatorship test ensures that the length of the shortest path is at least $a(b - r + 1)$. 
The proof of Lemma~\ref{lem:dict_edge} works if we have $y_j = x^j_{l(w_j)}$. 

\paragraph{\vertexcut.}
For every $w \in W'$, we cut the following vertices.
\[
\{ (w, v^{i}_x) : 0 \leq i \leq b, x_{l(w)} = * \mbox { or } 0 \}. 
\]
For $w \notin W'$, we cut every vertex in $\{ w \} \times \calD$. 
The total cost is $(b + 1)(\epsilon + \frac{1 - \epsilon}{r}) + \eta (b + 1)$. 
The completeness analysis for the dictatorship test ensures that the length of the shortest path is at least $a(b - r + 2)$. 
The proof of Lemma~\ref{lem:dict_vertex} works if we have $y_j = x^j_{l(w_j)}$.

\paragraph{Directed Multicut.}
For every $w \in W'$, we cut the following vertices.
\[
\{ (w, v^{\alpha}_x) : \alpha \in [r]^k, x_{l(w)} = * \mbox { or } 0 \}. 
\]
For $w \notin W'$, we cut every vertex in $\{ w \} \times \calD$. 
The total cost is at most $(\epsilon + \frac{1 - \epsilon}{r})r^k + \eta r^k \leq r^{k-1}(1 + r \epsilon + r \eta)$. 
The completeness analysis for the dictatorship test, Lemma~\ref{lem:dict_vertex}, ensures that there is no path from $s_i$ to $t_i$ for any $i$.

\paragraph{RMFC.}
For $w \in W'$, on Day $i (1 \leq i \leq b)$, we save every vertex in 
\[
\{ (w, v^i_x) : x_{l(w)} = * \mbox { or } B_{i-1} \leq x_{l(w)} \leq B_i \},
\]
where $B_i = \frac{H_i}{H_b}B$. 
For $w \notin W'$, on Day $i \, (1 \leq i \leq b)$, we save every vertex in $(w, v^i)$. This ensures that fire never spreads to vertices associated with $w \notin W'$. 
Each day, the total cost of saved vertices is at most $b\epsilon + \frac{1}{H_b} + b \eta$. 
The completeness analysis for the dictatorship test ensures that $t$ is saved in this case. 
The proof of Lemma~\ref{lem:dict_fire} works if we have $y_j = x^j_{l(w_j)}$.

\subsection{Soundness for Cut / Interdiction Problems}
We present the soundness analysis for \edgecutn, \vertexcutn, and Directed Multicut. The soundness analysis of RMFC is in Section~\ref{subsec:rmfc_sound}.
We first discuss how to extract an influential coordinate for each $u \in U_B$. 

\paragraph{\edgecut.}
Fix an arbitrary $C \subseteq E_G$ with the total cost $k'$, and consider the graph after cutting edges in $C$. 
We will show that if the length of the shortest path is greater than $l' = 2 + b - 4k'\sqrt{r} + 4ak' \sqrt{r}$, we can decode influential coordinates for many vertices of the Unique Games instance. 

For each $w \in W_B$, $0 \leq j \leq b$, and a sequence $\barc = (c_1, \dots, c_j) \in \{ \SL, \SSS \}^j$, let $g_{w, j, \barc} : \Omega^R \mapsto \{ 0, 1 \}$ such that $g_{w, j, \barc}(x) = 1$ if and only if there exists a path $p = (s, p_0 = (w_0, v^0_{x^0}), \dots, p_{j - 1} = (w_{j - 1}, v^{j - 1}_{x^{j - 1}}) , p_j = (w, v^j_x))$ for some $w_0, \dots, w_{j - 1} \in W_B$ and $x^0, \dots, x^{j - 1} \in \Omega^R$ such that $(p_{j' - 1}, p_{j'})$ is long if and only if $c_{j'} = \SL$ for $1 \leq j' \leq j$.

For $u \in U_B, 0 \leq j \leq b$, and $\barc \in \{ \SL, \SSS \}^j$, let $f_{u, j, \barc} : \Omega_R \mapsto [0, 1]$ be such that
\[
f_{u, j, \barc} (x) = \E_{w \in N(u)} [g_{w, j, \barc} (x \circ \pi^{-1} (u, w)) ],
\]
where $N(u)$ is the set of neighbors of $u$ in the Unique Games instance. 

Let $\gamma(u)$ be the sum of weights of the edges created by $u$ in $C$. $\E_u [\gamma(u)] = k'$, so at least $\frac{1}{2}$ fraction of $u$'s have $\E_u [\gamma(u)] \leq 2k'$. For such $u$, since the length of the shortest path is greater than $l' = 2 + b - 4k'\sqrt{r} + 4ak' \sqrt{r}$, the soundness analysis for the dictatorship test shows that there exist $j \in \{0, \dots, b \}, q \in [R], \barc$ such that $\Inf_{q}^{\leq d} [f_{u, j, \barc}] \geq \tau$ ($d$ and $\tau$ do not depend on $u$). 

\paragraph{\vertexcut.}
Fix an arbitrary $C \subseteq V_G$ with the total cost $k'$, and consider the graph after cutting vertices in $C$. 
We will show that if the length of the shortest path is greater than $l' = 2 + (b - 4k') + 8 a k'$, we can decode influential coordinates for many vertices of the Unique Games instance. 

For each $w \in W_B$, $1 \leq j \leq b$, and a sequence $\bari = (i_1 < \dots < i_j) \in \{0, \dots, b \}^j$, let $g_{w, j, \bari} : \Omega^R \mapsto \{ 0, 1 \}$ such that $g_{w, j, \bari}(x) = 1$ if and only if there exists a path $p = (s, (w_1, v^{i_1}_{x^1}), \dots, (w_{j - 1}, v^{i_{j-1}}_{x^{j-1}}), (w, v^{i_j}_x))$ for some $w_1, \dots, w_{j - 1} \in W_B$ and $x^1, \dots, x^{j-1} \in \Omega^R$. 

For $u \in U_B, 0 \leq j \leq b$, and $\bari \in \{ 0, \dots, b \}^i$, let  $f_{u, j, \bari} : \Omega_R \mapsto [0, 1]$ be such that
\[
f_{u, j, \bari} (x) = \E_{w \in N(u)} [g_{w, j, \bari} (x \circ \pi^{-1} (u, w)) ],
\]
where $N(u)$ is the set of neighbors of $u$ in the Unique Games instance. 

Let $\gamma(u)$ be the expected weight of $C \cap (\{ w \} \times \calD)$, where $w$ is a random neighbor of $u$. 
$\E_u [\gamma(u)] = k'$, so at least $\frac{1}{2}$ fraction of $u$'s have $\E_u [\gamma(u)] \leq 2k'$. For such $u$, 
Since the length of the shortest path is greater than $l' = 2 + (b - 4k') + 8 a k'$, the soundness analysis for the dictatorship test shows that there exists $q \in [R], 1 \leq j \leq b, \bari$ such that $\Inf_{q}^{\leq d} [f_{u, j, \bari}] \geq \tau$ ($d$ and $\tau$ do not depend on $u$).

\paragraph{Directed Multicut.}
Fix an arbitrary $C \subseteq V_G$ with the total cost $k'$, and consider the graph after cutting vertices in $C$. 
Let $\beta > 0$ be another small parameter to be determined later. 
If $k' \leq  k(1 - \epsilon)(1 - \beta)(r - 1)^{k - 1}$, we prove that we can decode influential coordinates for many vertices of the Unique Games Instance.

For each $w \in W_B$, $i \in [k]$, $1 \leq j \leq r^k$, and a sequence $\bara = (\alpha_1, \dots, \alpha_j) \in ([r]^k)^j$, let $g_{w, j, \bara} : \Omega^R \mapsto \{ 0, 1 \}$ such that $g_{w, j, \bara}(x) = 1$ if and only if there exists a path $p = (s, (w_1, v^{\alpha_1}_{x^1}), \dots, (w_{j - 1}, v^{\alpha_{j-1}}_{x^{j-1}}), (w, v^{\alpha_j}_x))$ for some $w_1, \dots, w_{j - 1} \in W_B$ and $x^1, \dots, x^{j - 1} \in \Omega^R$.

For $u \in U_B, 0 \leq j \leq b$, and $\bara \in ([r]^k)^j$, let  $f_{u, j, \bara} : \Omega_R \mapsto [0, 1]$ be such that
\[
f_{u, j, \bara} (x) = \E_{w \in N(u)} [g_{w, j, \bara} (x \circ \pi^{-1} (u, w)) ],
\]
where $N(u)$ is the set of neighbors of $u$ in the Unique Games instance. 

Let $\gamma(u)$ be the expected weight of $C \cap (\{ w \} \times \calD)$, where $w$ is a random neighbor of $u$. 
$\E_u [\gamma(u)] = k' \leq  k(1 - \epsilon)(1 - \beta)(r - 1)^{k - 1}$, so at least $\beta$ fraction of $u$'s have $\E_u [\gamma(u)] \leq k(1 - \epsilon)(r - 1)^{k - 1}$. 
For such $u$, since any $s_i$-$t_i$ pair is disconnected, the soundness analysis for the dictatorship test shows that there exists $q \in [R], 1 \leq j \leq r^k, \bara$ such that $\Inf_{j'}^{\leq d} [f_{u, j, \bara}] \geq \tau$ ($d$ and $\tau$ do not depend on $u$).

\paragraph{Finishing Up.} 
The above analyses for \edgecutn, \vertexcutn, and Directed Multicut can be abstracted as follows. Each vertex $u \in U_B$ is associated with $\{ f_{u, h} : \Omega^R \mapsto [0, 1] \}_{h \in T}$ for some index set $H$ ($|H|$ is upper bounded by some function of $b$ for \edgecut and \vertexcutn, and some function of $r$ and $k$ for Multicut).  For at least $\beta$ fraction of $u \in U_B$ ($\beta = \frac{1}{2}$ for \edgecut and \vertexcut), there exist $h \in H$ and $q \in [R]$ such that $\Inf_{q}^{\leq d} [f_{u, h}] \geq \tau$. Set $l(u) = q$ for those vertices. 
Since
\begin{align*}
\Inf_{q}^{\leq d}(f_{u, h}) 
&= \sum_{\alpha_q \neq 0, |\alpha| \leq d} \widehat{f_{u, h}}(\alpha)^2 
= \sum_{\alpha_q \neq 0, |\alpha| \leq d} (\E_w [\widehat{f_{w, h}}(\pi(u, w)^{-1}(\alpha))]^2) \\
&\leq \sum_{\alpha_q \neq 0, |\alpha| \leq d} \E_w [\widehat{f_{w, h}}(\pi(u, w)^{-1}(\alpha))^2] = \E_w [\Inf_{\pi(u, w)^{-1}(q)}^{\leq d}(f_{w, h})],
\end{align*}
at least $\tau/2$ fraction of $u$'s neighbors satisfy $\Inf_{\pi(u, w)^{-1}(q)}^{\leq d} (f_{w, h}) \geq \tau / 2$. There are at most $2d / \tau$ coordinates with degree-$d$ influence at least $\tau / 2$ for a fixed $h$, so their union over $h \in H$ yields at most $\frac{2d \cdot |H|}{\tau}$ coordinates. Choose $l(w)$ uniformly at random among those coordinates (if there is none, set it arbitrarily). The above probabilistic strategy satisfies at least $\beta(\frac{\tau}{2})(\frac{\tau}{2d \cdot |H|} )$ fraction of all edges. Taking $\eta$ smaller than this quantity proves the soundness of the reductions. 

\subsection{Soundness for RMFC}
\label{subsec:rmfc_sound}
Fix an arbitrary solution $C_1, \dots, C_b \subseteq V$ such that $C_i$ is saved on Day $i$ and the weight of each $C_i$ is at most $k' = \frac{1}{10}$. 
Suppose that $t$ is saved. 
We will prove that the Unique Games instance admits a good labeling. 

For each $w \in W_B$, $1 \leq i \leq b$, let $g_{w, i} : \Omega^R \mapsto \{ 0, 1 \}$ such that $g_{w, i}(x) = 1$ if and only if $(w, v^i_x)$ is burning on Day $i$. Let Day $i^*$ be the first day where $\E_{w, x}[g_{w, i^*}(x)] \geq \frac{1}{2}$ and  $\E_{w, x}[g_{w, i^*+1}(x)] \leq \frac{1}{2}$. Such $i^*$ must exist since $\E_{w, x}[g_{w, 1}] \geq 1 - k' \geq \frac{1}{2}$ but $\E_{w, x}[g_{w, b}] = 0$. 
For each $w \in W_B$, let $g_w := g_{w, i^*}$ and let $f_w : \Omega^R \mapsto \{0, 1\}$ be such that $f_w(x) = 1$ if and only if there exists $(w', x')$ such that the vertex $(w', v^{i^*}_{x'})$ is burning on Day $i$ and there exists an edge $((w', v^{i^*}_{x'}), (w, v^{i^*+1}_x))$. 
We must have $\E_{w, x}[f_w(x)] \leq \frac{1}{2} + \frac{1}{10} = \frac{3}{5}$, since we can save at most $k' = \frac{1}{10}$ fraction of $\{ g_{w, i^* + 1} \}_w$ before Day $i^* + 1$.

By an averaging argument, at least $\frac{1}{4}$ fraction of $w \in W_B$ satisfies $\E_x [g_{w, i^*}] \geq \frac{1}{4}$. Call them {\em heavy} vertices. By the expansion of the Unique Games instance, at least $\frac{9}{10}$ fraction of $u \in U_B$ has a heavy neighbor,
and at least $\frac{9}{10}$ fraction of $w \in W_B$ has a heavy $w' \in W_B$ such that $(u, w), (u, w') \in E_B$ for some $u \in U_B$ (say $w$ is {\em reachable} from $w'$). 

By Theorem~\ref{thm:mossel}, there exist $\tau$ and $d$ such that for each heavy $w'$, if $\Inf_j^{\leq d} [g_{w'}] \leq \tau$ for all $j \in [R]$, all $w$ reachable from $w'$ should satisfy $\E_x [f_w(x)] \geq \frac{9}{10}$ (say $w'$ {\em reveals an influential coordinate} if such $j$ exists). 
At least $\frac{1}{4} - \frac{1}{10} = 0.15$ fraction of $w'$ are heavy and reveal an influential coordinate, since otherwise by the expansion $\E_{w, x} [f_w(x)] \geq (\frac{9}{10})^2 > \frac{3}{5}$. 

Another expansion argument ensures that at least $\frac{9}{10}$ fraction of $u \in U_B$ is a neighbor of heavy $w$ with an influential coordinate. Call such $u$ {\em good} and let $h_u : \Omega^R \mapsto \{0, 1 \}$ such that $h_u(x) = g_w(x \circ \pi^{-1}(u, w))$. Finally, call $w \in W_B$ {\em good} if $\E_x[f_w(x)] \leq \frac{9}{10}$. Since 
$\E_{w, x}[f_w(x)] \leq \frac{3}{5}$, the fraction of good $w$ is at least $\frac{1}{3}$. Theorem~\ref{thm:mossel} ensures that if there is $(u, w) \in E_B$ where both $u$ and $w$ are good, there exists $j \in [R]$ such that $\min(\Inf_j^{\leq d}[h_u], \Inf_{\pi(u, w)^{-1}(j)}^{\leq d}[f_w]) \geq \tau$. 

Our labeling strategy for Unique Games is as follows. Each good $u$ will get a random label from $\{ j : \Inf_{j}^{\leq d} [h_u] \geq \tau \}$, and each good $w$ will get a random label from $\{ j : \Inf_{j}^{\leq d} [f_w] \geq \tau \}$. Other vertices get an arbitrary label. 
Since at least $\frac{9}{10}$ fraction of $u \in U_B$ are good, $\frac{1}{3}$ fraction of $w \in W_B$ are good, and the Unique Games instance is biregular, at least $\frac{9}{10} - \frac{2}{3} \geq \frac{1}{5}$ fraction of edges are between good vertices. For each $f_w$ or $h_u$, the number of coordinates $j$ with degree-$d$ influence at least $\tau$ is at most $\frac{d}{\tau}$. Therefore, this strategy satisfies at least 
$\frac{1}{5} \cdot (\frac{d}{\tau})^2$ fraction of edges in expectation. 
Taking $\eta$ smaller than this quantity proves the soundness of the reduction. 

\subsection{Final Results}
Combining our completeness and soundness analyses and taking $\epsilon$ and $\eta$ small enough, we prove our main results.

\paragraph{\edgecut.} It is hard to distinguish the following cases. 
\begin{itemize}
\item Completeness: There is a cut of weight at most $k := \frac{2b}{r} + 2 \eta b$ such that the length of the shortest path after the cut is at least $l := a(b - r + 1)$. 
\item Soundness: For every cut of weight $k'$, the length of the shortest path is at most $l' := 2 + b - 4k' \sqrt{r} + 4ak'\sqrt{r}$. 
\end{itemize}
Setting $a = 4$, $b = 2r - 1$ yields $k \leq 4$ and $l = 4r$. Since $l' \geq 4r$ implies $k' = \Omega(\sqrt{r})$, we prove the first case of Theorem~\ref{thm:edge1}. Setting $a = \sqrt{r}$ and $b = 2r - 1$ yields $k \leq 4$ and $l = r^{1.5}$. Since $l' = O(k' r)$, we prove the last two cases of Theorem~\ref{thm:edge1}. 

\paragraph{\vertexcut.} It is hard to distinguish the following cases. 
\begin{itemize}
\item Completeness: There is a cut of weight at most $k := (b + 1)(\epsilon + \frac{1 - \epsilon}{r}) + \eta (b + 1)$ such that the length of the shortest path after the cut is at least $l := a(b - r + 2)$. 
\item Soundness: For every cut of weight $k'$, the length of the shortest path is at most $l' := 2 + (b - 4k') + 8 a k'$. 
\end{itemize}
Setting $a = 4$, $b = 2r - 2$ yields $k \leq 2$ and $l = 4r$. Since $l' \geq 4r$ implies $k' = \Omega(r)$, we prove the first case of Theorem~\ref{thm:vertex1}. Setting $a = r$ and $b = 2r - 2$ yields $k \leq 2$ and $l = r^2$. Since $l' = O(k' r)$, we prove the last two cases of Theorem~\ref{thm:vertex1}.

\paragraph{Directed Multicut.} It is hard to distinguish the following cases. 
\begin{itemize}
\item Completeness: There is a cut of weight at most $r^{k-1}(1 + r \epsilon + r \eta)$ that separates every $s_i$ and $t_i$. 
\item Soundness: Every multicut must have weight at least $k(1 - \epsilon)(1 - \beta)(r - 1)^{k - 1}$. 
\end{itemize}
This immediately implies Theorem~\ref{thm:multicut} by taking large $r$ and small $\epsilon, \beta, \eta$.

\paragraph{RMFC.} It is hard to distinguish the following cases. 
\begin{itemize}
\item Completeness: There is a solution where we save vertices of cost $b \epsilon + \frac{1}{H_b} + b\eta = O(\frac{1}{\log b})$ each day to eventually save $t$. 
\item Soundness: Saving vertices of $\frac{1}{10}$ each day cannot save $t$. 
\end{itemize}
This immediately implies Theorem~\ref{thm:firefighter} by taking small $\epsilon$ and $\eta$. 

\paragraph{Acknowledgments}
The author thanks Konstantin Makarychev for useful discussions on Directed Multicut, and Marek Elias for introducing Shortest Path Interdiction. 

\bibliographystyle{alpha}
\bibliography{../../../mybib}

\end{document}